\title{Exploiting linear substructure in LRKFs (Extended)}
\author{Marcus Greiff, Anders Robertsson and Karl Berntorp \\ \texttt{[marcus.greiff, anders.robersson]@control.lth.se}, \texttt{berntorp@merl.com}}
\begin{document}

\maketitle

\begin{abstract}
We exploit knowledge of linear substructure in the linear-regression Kalman filters (LRKFs) to simplify the problem of moment matching. The theoretical results yield quantifiable and significant computational speedups at no cost of estimation accuracy, assuming partially linear estimation models. The results apply to any symmetrical LRKF, and reductions in computational complexity are stated as a function of the cubature rule, the number of linear and nonlinear states in the estimation model respectively. The implications for the filtering problem are illustrated by numerical examples.
\end{abstract}

\thispagestyle{empty}
\pagestyle{empty}

\section{Introduction}\label{sec:introduction}
In this paper, we explore the incorporation of known linear substructure in the linear-regression Kalman filters (LRKFs) summarised in~\cite{steinbring2014lrkf,kurz2017linear}. For future reference, we refer to such filters as partially linear LRKFs, or PL-LRKFs for short. Many physical systems arising from Newtonian mechanics have some partially linear substructure in their dynamical equations, and a great number of systems have a partially linear measurement model. For state estimation with such systems, the large family of particle filters (PFs)~\cite{gordon1993novel,smith2013sequential} quickly become computationally intractable with the number of states that are to be estimated. This has historically been a major motivation for the development of Rao-Blackwellized particle filters (RBPFs)~\cite{schon2005marginalized,schon2006marginalized}, which assume a particle distribution in the nonlinear states and a Gaussian distribution in the linear states. However, due to stemming from the particle filtering framework, such filters also tend to be computationally cumbersome for large numbers of states, which is why a vast majority of nonlinear estimation applications still employ Gaussian approximate density filters (ADFs), such as the extended Kalman filter (EKF)~\cite{sarkka2013bayesian}.

For problems where the estimate distribution is likely to be uni-modal or the RBPF-variants are deemed computationally intractable, alternatives to the EKFs include the LRKFs. These are also Gaussian ADFs, but use various cubature rules in order to evaluate a set of moment integrals, instead of approximating the nonlinear functions by Taylor expansions, as done in the EKF. By using high-order cubature rules, these filters are often favored over the EKF for their estimation accuracy (see e.g.,~\cite{steinbring2014lrkf,kurz2017linear}), but much like the PFs, the LRKFs generally do not exploit linear substructure in the nonlinear estimation models.

Consequently, we analyze the problem of moment matching, that is, computing the first two moments of the joint distribution of input and output of a nonlinear function, for partially linear functions using various cubature rules. Specifically, we focus on the spherical cubature rule (SC) used in the cubature Kalman filter (CKF)~\cite{arasaratnam2009cubature,jia2013high}; the unscented transform (UT) used in the unscented Kalman filter (UKF) in~\cite{julier1997new,wan2001unscented}; the Gauss-Hermite cubature rule (GHC) used in the GHKF in~\cite{sarkka2013bayesian}; and the stochastic integration rule (SIR) used in the randomized unscented Kalman filter (RUKF) in~\cite{dunik2011development, dunik2013stochastic}. However, the results apply to all symmetric LRKFs of which the aforementioned filters are but a subset. As such, this work distinguishes itself from the relevant prior work in~\cite{morelande2007unscented,beutler2009gaussian} in two main respects. Firstly, in its generality: we are considering all symmetric LRKFs in terms of the cubature point set, and not specific cubature rules such as the RB-UKF in~\cite{morelande2007unscented}. Secondly, in that we are not using the conditionally linear structure~\cite{beutler2009gaussian}, but a partially linear structure in the otherwise nonlinear equations.

\newpage\subsection{Contributions}
In adition to the numerical results, the main theoretical contributions of this paper are as follows:
\begin{enumerate}
    \item[(i)] An expression for the moments of the joint distribution of input and output to a nonlinear function with a linear substructure, given a generic cubature rule defined by a set of integration points and weights.
    \item[(ii)] Conditions under which the joint distribution can be evaluated with a number of integration points that scale with the number of nonlinear states instead of with the total number of states.
    \item[(iii)] Conditions under which the square root factorization in the LRKFs only needs to be computed partially.
    \item[(iv)] Proofs that the conditions in (ii) and (iii) hold for the cubature rules used in the CKF, UKF and GHKF.
\end{enumerate}

\subsection{Overview}
We start by a brief review of the LRKFs in Section~\ref{sec:prelim}, followed by the definition of the moment matching problem in Section~\ref{sec:momentmathcing}. The main results are given in Section~\ref{sec:results}, and their implications for the filtering problem are subsequently illustrated in Section~\ref{sec:filtering}. Finally, numerical results are given in~\ref{sec:numerical}, and Section~\ref{sec:conclusion} closes the paper.

\subsection{Notation}
In the following, we let $\evec_i^N\in  \mathbb{R}^N$ denote a unit vector with the $i^{th}$ element set to 1, and all other elements set to zero. The vector $\mathbf{1}_N\in\mathbb{R}^N$ is a column vector of ones, the matrix $\Z_{N\times M}\in\mathbb{R}^{N\times M}$ is a zero matrix, and $\I_N\in\mathbb{R}^{N\times N}$ is the identity matrix. Vectors are denoted by bold font $\xvec$, and sets with calligraphic font $\Scal$ with $|\Scal|$ denoting set cardinality. Here, $\Ncal(\xvec|\mx, \Pxx)$ denotes a Gaussian probability density function over $\xvec$ with mean $\mx$ and covariance $\Pxx$, closely following the notation in~\cite{sarkka2013bayesian}. The sub-indexation $(\cdot)_k$ indicates a variable at a time-step $k$, and the notation $(\cdot)_{a|b}$ indicates a variable at a time $k=a$ conditioned on information up until and including $k=b$. Finally, we take $\otimes$ to denote the usual Kronecker product, and let $\star$ denote a redundant entry in a symmetric matrix.

\section{Preliminaries}\label{sec:prelim}
We consider a discrete-time systems on the form,
\begin{subequations}\label{eq:dynsys}
\begin{align}
\xvec_{k+1} &= \F(\xvec_{k},\qvec_{k})\hspace{2.1pt}\in\mathbb{R}^X,\label{eq:dyn}\\
\yvec_{k} &= \HH(\xvec_{k},\rvec_{k})\in\mathbb{R}^Y,\label{eq:meas}
\end{align}
\end{subequations}
where the process noise and measurement noise are Gaussian distributed, with $\qvec_{k}\sim\Ncal(\boldsymbol{0}, \Q_k)$, $\rvec_{k}\sim\Ncal(\boldsymbol{0}, \R_{k})$, and $\boldsymbol{0}\prec \Q_{k} = \Q_{k}^{\top}, \boldsymbol{0}\prec \R_{k} = \R_{k}^{\top}$. The objective is to recursively estimate the state $\xvec_k$ given $\yvec_{0:k}$ using the LRKFs. In the generic Gaussian approximate density filters (ADFs), of which the LRKFs are a subset, the distribution of the state-estimate at time step $k-1$ is approximated by a Gaussian,
\begin{equation*}
p({\xvec}_{k-1}|\yvec_{0:k-1})\hspace{-1pt}\approx\hspace{-1pt}\Ncal({\xvec}_{k-1}| \mx_{k-1}, \Pxx_{k-1}).
\end{equation*}
Given this approximation, the estimate distribution is propagated through the dynamics in~\eqref{eq:dyn}, yielding a prediction
\begin{equation}\label{eq:dynupdate}
p({\xvec}_k|\yvec_{0:k-1}) \approx \Ncal({\xvec}_{k|k-1}| \mx_{k|k-1}, \Pxx_{k|k-1}),
\end{equation}
and the joint distribution of the predicted state and measurement is approximated based on~\eqref{eq:meas}, as
\begin{equation}\label{eq:jointdist}
\Ncal\Bigg(
\begin{bmatrix}{\xvec}_{k|k-1}\\ \yvec_{k|k-1} \end{bmatrix}\Big|
\begin{bmatrix}\mx_{k|k-1}\\ \my_{k|k-1} \end{bmatrix}, \begin{bmatrix}
\Pxx_{k|k-1} & \Pxy_{k|k-1}\\
\Pyx_{k|k-1} & \Pyy_{k|k-1}
\end{bmatrix}\Bigg).
\end{equation}
Upon receiving a measurement ${\yvec}_{k|k-1} = \yvec_k$, we evaluate
\begin{equation*}
p({\xvec}_k|\yvec_{0:k})=\mathcal{N}({\xvec}_k | \mx_{k|k}, \Pxx_{k|k}),
\end{equation*}
using Bayes' rule, where for the multivariate Gaussian case,
\begin{flalign}\label{eq:conditional}
\mx_{k|k} &=\mx_{k|k-1} + \Pxy_{k|k-1}(\Pyy_{k|k-1})^{-1}(\yvec_k - \my_{k|k-1}),\nonumber\\
\Pxx_{k|k} &=\Pxx_{k|k-1} - \Pxy_{k|k-1}(\Pyy_{k|k-1})^{-1}\Pyx_{k|k-1}.
\end{flalign}

The difference among all of the Gaussian ADFs lies in the way the prediction in~\eqref{eq:dynupdate} and the joint distribution in~\eqref{eq:jointdist} are approximated. In the event of linear flow and measurement equations~\eqref{eq:dynsys}, the state-distribution will be Gaussian at all times, and the conditional distribution can be computed exactly. The above equations in~\eqref{eq:dynupdate},~\eqref{eq:jointdist}, and~\eqref{eq:conditional} then result in the familiar Kalman filter. However, if~\eqref{eq:dynsys} is nonlinear, many forms of approximations can be considered in~\eqref{eq:dynupdate} and~\eqref{eq:jointdist}. Here, direct approximation of the associated moment integrals, results in the large family of the linear regression Kalman filters (LRKFs)~\cite{steinbring2014lrkf}. This is commonly referred to as moment matching, and we will now analyze how knowledge of linear substructure in~\eqref{eq:dynsys} can be leveraged to simplify the numerical evaluation of these moment integrals in the context of the aforementioned SC, UT, GHC and SIR schemes.

\section{The problem of moment matching}\label{sec:momentmathcing}
In the following, we simplify the notation by temporarily dropping the time indexation and only considering the computation of the first and second moments of the joint distribution of input and output to a nonlinear function with linear substructure. To investigate potential simplifications in the moment matching, consider a function $\fullnonlinfunc\;:\mathbb{R}^X\rightarrow\mathbb{R}^Y$, operating on a state $\xvec\in\mathbb{R}^X$, which can be partitioned into a linear part $\lvec\in\mathbb{R}^L$ and a nonlinear part $\zvec\in\mathbb{R}^Z$,
\begin{equation}\label{eq:distX}
\removeinpaper{
\xvec = \begin{bmatrix}
\zvec\\ \lvec
\end{bmatrix}\in\mathbb{R}^X = \mathbb{R}^{Z + L},\qquad}
\mathcal{N}(\xvec|\mx,\Pxx) = 
\mathcal{N}\Bigg(
\begin{bmatrix}
\zvec\\\lvec
\end{bmatrix}\Bigg|
\begin{bmatrix}
\mn\\\ml
\end{bmatrix},
\begin{bmatrix}
\Pzz&\Pzl\\\Pxz&\Pll
\end{bmatrix}
\Bigg),
\end{equation}
where the $\fullnonlinfunc$ has some linear substructure on the form
\begin{equation}\label{eq:func}
\yvec = \G(\xvec)\triangleq
\begin{bmatrix}
\nonlinfunc(\zvec)\\
\A\xvec
\end{bmatrix}.
\end{equation}

In essence, only knowing the structure in~\eqref{eq:distX} and ~\eqref{eq:func}, we seek a simplified moment-matching of the joint density
\begin{equation}
\mathcal{N}\Bigg(
\begin{bmatrix}
\xvec\\\yvec
\end{bmatrix}\Bigg|
\begin{bmatrix}
\mx\\\my
\end{bmatrix},
\begin{bmatrix}
\Pxx&\Pxy\\\Pyx&\Pyy
\end{bmatrix}
\Bigg),
\end{equation}
by approximate evaluation of the moment integrals, cf.~\cite{ito2000gaussian},
\begin{subequations}\label{eq:momentintegrals}
\begin{align}
\my &=\int_{\mathbb{R}^X}\fullnonlinfunc(\xvec)\mathcal{N}(\xvec|\mx,\Pxx)d\xvec,\\
\Pxy&=\int_{\mathbb{R}^X}(\xvec -\mx)(\fullnonlinfunc(\xvec) -\my)^{\top}\mathcal{N}(\xvec|\mx,\Pxx)d\xvec,\\
\Pyy&=\int_{\mathbb{R}^X}(\fullnonlinfunc(\xvec)-\my)(\fullnonlinfunc(\xvec)-\my)^{\top}\mathcal{N}(\xvec|\mx,\Pxx)d\xvec.
\end{align}
\end{subequations}
The question is how to leverage knowledge of the linear substructure in~\eqref{eq:distX} when evaluating~\eqref{eq:momentintegrals}, and what implications this has for the resulting filtering problem in Section~\ref{sec:prelim}.

\subsection{Approximate moment matching}
Moving forward, we assume a non-degenerate distribution over $\xvec$, with $\Pxx ={\Pxx}^{\top}\succ \boldsymbol{0}$, such that there exists
\begin{equation}\label{eq:lowertriangulardecomp}
\Pxx = \Lxx{\Lxx}^{\top}, \qquad
\Lxx =
\begin{bmatrix}
\Lnn & \Z\\
\Lln & \Lll
\end{bmatrix},
\end{equation}
where $\Lxx, \Lnn, \Lll$ are all lower-triangular. Given this, there are many ways of approximating the moment integrals. However, all of the prior mentioned schemes first take a coordinate transform $\xivec = (\Lxx)^{-1} (\xvec - \mx)$, and then approximate the moment integrals in a finite sum
\begin{subequations}\label{eq:approx}
\begin{align}
\int_{\mathbb{R}^X}\fullnonlinfunc(\xvec)\mathcal{N}(\xvec|\mx,\Pxx)d\xvec &=\int_{\mathbb{R}^X}\fullnonlinfunc(\mx + \Lxx\xivec)\mathcal{N}(\xivec|\Z,\I)d\xivec\\
&= \int_{\mathbb{R}}\cdots\int_{\mathbb{R}}\fullnonlinfunc(\mx + \Lxx\xivec)\mathcal{N}(\xi_1|0,1)d\xi_1\cdots\mathcal{N}(\xi_X|0,1) d\xi_X\\
&\approx \sum\limits_{i = 1}^{C(X)}w^{(i)}\fullnonlinfunc (\mx+\Lxx\xivec^{(i)}),
\end{align}
\end{subequations}
by a total of $C(X)$ pairs of weights and integration points $\mathcal{P} = \{(w^{(i)},\xivec^{(i)})\}_{i =1}^{C(X)}$, where the cardinality $C(X) = |\Pcal|$ increases with $X = \text{dim}(\xvec)$. In this notation, the LRKF moment approximations can be written on the form
\begin{subequations}\label{eq:approxmomentint}
\begin{align}
\mathcal{X}^{(i)} &= \mx + \Lxx\xivec^{(i)}
\hspace{10pt}
i\in\{1,...,|\mathcal{P}|\}, \\
\mathcal{Y}^{(i)} &= \fullnonlinfunc(\mathcal{X}^{(i)})
\hspace{37pt}
i\in\{1,...,|\mathcal{P}|\}, \\
\my &
\removeinpaper{=\int_{\mathbb{R}^X}\gvec(\xvec)\mathcal{N}(\xvec|\mx,\Pxx)d\xvec \hspace{97pt}}
\approx\sum\limits_{i = 1}^{|\mathcal{P}|}w^{(i)}\mathcal{Y}^{(i)},\\
\Pxy&
\removeinpaper{=\int_{\mathbb{R}^X}(\xvec -\mx)(\gvec(\xvec) -\my)^{\top}\mathcal{N}(\xvec|\mx,\Pxx)d\xvec \hspace{13pt}}
\approx\sum\limits_{i = 1}^{|\mathcal{P}|}w^{(i)}(\mathcal{X}^{(i)} -\mx)(\mathcal{Y}^{(i)} -\my)^{\top},\\
\Pyy&
\removeinpaper{=\int_{\mathbb{R}^X}(\gvec(\xvec) -\my)(\gvec(\xvec) -\my)^{\top}\mathcal{N}(\xvec|\mx,\Pxx)d\xvec}
\approx\sum\limits_{i = 1}^{|\mathcal{P}|}{w}^{(i)}(\mathcal{Y}^{(i)} -\my)(\mathcal{Y}^{(i)} -\my)^{\top}.
\end{align}
\end{subequations}

To proceed with the analysis, we start with some additional definitions. Let $\mathbf{N} = \begin{bmatrix}
\I_{Z}& \Z_{Z\times L}
\end{bmatrix}$ and  $\bar{\mathbf{N}} = \begin{bmatrix}
\Z_{L\times Z}& \I_{L}
\end{bmatrix}$, such that $\mathbf{N}\xvec = \zvec$ and $\bar{\mathbf{N}}\xvec = \lvec$. For simplicity, in analyzing the cubature rules in the context of the linear substructure in~\eqref{eq:func}, we further categorize $\Pcal$ into three categories; central ($c$), linear ($l$), and nonlinear ($z$). The points where $\xivec = \Z$, we refer to as central points with a sub-index $(\cdot)_c$; if $\xivec$ is at the origin in the dimensions corresponding to the input of the nonlinear function $\nonlinfunc$ used to define $\fullnonlinfunc$ in~\eqref{eq:func}, we refer to these points as the linear points with a sub-index $(\cdot)_l$, as they only differ from the origin in the linear dimensions of the state $\lvec\subseteq\xvec\in\mathbb{R}^X$; if $\xivec$ differs from the origin in the input to $\nonlinfunc$, the points and weights are sub-indexed $(\cdot)_z$. \replaceinpaper{In mathematical terms,
\begin{align}\label{eq:Pdef}
\mathcal{P}_c &= \{(w,\xivec)\in\mathcal{P}| \xivec = \Z\},&
\mathcal{P}_l &= \{(w,\xivec)\in\mathcal{P}\backslash \Pcal_c| \mathbf{N}\xivec = \Z\},&
\mathcal{P}_z &= \mathcal{P}\backslash (\mathcal{P}_c\cup \mathcal{P}_l)
\end{align}
This allows us to categorize the set of points and weights into three subsets
\begin{subequations}\label{eq:subsetdef}
\begin{align}
w_c^{(i)} &= w^{(i)}\in\mathbb{R}, &\Xi_c^{(i)} &= \xivec^{(i)}\in\mathbb{R}^X, &\mathcal{X}_c^{(i)} &= \mx + \Lxx\xivec^{(i)}\in\mathbb{R}^X,&\forall&(w^{(i)},\xivec^{(i)})\in\mathcal{P}_c,\\
w_z^{(i)} &= w^{(i)}\in\mathbb{R}, &\Xi_z^{(i)} &= \xivec^{(i)}\in\mathbb{R}^X, &\mathcal{X}_z^{(i)} &= \mx + \Lxx\xivec^{(i)}\in\mathbb{R}^X,&\forall&(w^{(i)},\xivec^{(i)})\in\mathcal{P}_z,\\
w_l^{(i)} &= w^{(i)}\in\mathbb{R}, &\Xi_l^{(i)} &= \xivec^{(i)}\in\mathbb{R}^X, &\mathcal{X}_l^{(i)} &= \mx + \Lxx\xivec^{(i)}\in\mathbb{R}^X,&\forall&(w^{(i)},\xivec^{(i)})\in\mathcal{P}_l,
\end{align}
\end{subequations}
and, a number of associated matrices
\begin{subequations}\label{eq:matdef}
\begin{align}
\wvec_c &=\begin{bmatrix}
w_c^{(1)} & \cdots & w_c^{(|\Pcal_c|)}
\end{bmatrix},&
\Xicalbf_c &=\begin{bmatrix}
\Xi_c^{(1)} & \cdots & \Xi_c^{(|\Pcal_c|)}
\end{bmatrix},&
\Xcalbf_c &=\begin{bmatrix}
\mathcal{X}_c^{(1)} & \cdots & \mathcal{X}_c^{(|\Pcal_c|)}
\end{bmatrix}\\
\wvec_z &=\begin{bmatrix}
w_z^{(1)} & \cdots & w_z^{(|\Pcal_z|)}
\end{bmatrix},&
\Xicalbf_z &=\begin{bmatrix}
\Xi_z^{(1)} & \cdots & \Xi_z^{(|\Pcal_z|)}
\end{bmatrix},&
\Xcalbf_z &=\begin{bmatrix}
\Xcal_z^{(1)} & \cdots & \Xcal_z^{(|\Pcal_z|)}
\end{bmatrix}\\
\wvec_l &=\begin{bmatrix}
w_l^{(1)} & \cdots & w_l^{(|\Pcal_l|)}
\end{bmatrix},&
\Xicalbf_l &=\begin{bmatrix}
\Xi_l^{(1)} & \cdots & \Xi_l^{(|\Pcal_l|)}
\end{bmatrix},&
\Xcalbf_l &=\begin{bmatrix}
\mathcal{X}_l^{(1)} & \cdots & \mathcal{X}_l^{(|\Pcal_l|)}
\end{bmatrix}
\end{align}
\end{subequations}
and
\begin{align}\label{eq:fulldef}
\wvec = 
\begin{bmatrix}
\wvec_c&\wvec_z&\wvec_l
\end{bmatrix},\;\;
\Xicalbf = 
\begin{bmatrix}
\Xicalbf_c&\Xicalbf_z&\Xicalbf_l
\end{bmatrix},\;\;
\Xcalbf = 
\begin{bmatrix}
\Xcalbf_c&\Xcalbf_z&\Xcalbf_l
\end{bmatrix},\;\;
\Wcalbf = \text{diag}(\wvec),\;\;
\Wcalbf_z = \text{diag}(\wvec_z)
\end{align}
with
\begin{align}\label{eq:Zdef}
\mathcal{Z}^{(i)} &= \mathbf{N}
\mathcal{X}^{(i)},&
\mathcal{Z}_c^{(i)} &= \mathbf{N}
\mathcal{X}_c^{(i)},&
\mathcal{Z}_z^{(i)} &= \mathbf{N}
\mathcal{X}_z^{(i)}&
\mathcal{Z}_l^{(i)} &= \mathbf{N}
\mathcal{X}_l^{(i)}.
\end{align}
}{
In the following, we let $a$ denote a sub-index $c, z,$ or $l$, and let
\begin{subequations}\label{eq:Pdef}
\begin{align}
\mathcal{P}_c &= \{(w,\xivec)\in\mathcal{P}| \xivec = \Z\},\\
\mathcal{P}_l &= \{(w,\xivec)\in\mathcal{P}\backslash \Pcal_c| \mathbf{N}\xivec = \Z\},\\
\mathcal{P}_z &= \mathcal{P}\cap (\mathcal{P}_c\cup \mathcal{P}_l).
\end{align}
\end{subequations}
Here, for any of the point sets $\Pcal_a$, we let
\begin{align}\label{eq:subsetdef}
\begin{cases}
w_a^{(i)} = w^{(i)},\\
\Xi_a^{(i)}= \xivec^{(i)},\\
\mathcal{X}_a^{(i)}=\mx+ \Lxx\xivec^{(i)}\\
\Zcal_a^{(i)} = \N\mathcal{X}_a^{(i)}
\end{cases}
\forall&(w^{(i)},\xivec^{(i)})\in\mathcal{P}_a,
\end{align}
and define associated matrices
\begin{subequations}\label{eq:matdef}
\begin{align}
\wvec_a &=\begin{bmatrix}
w_a^{(1)} & \cdots & w_a^{(|\Pcal_a|)}
\end{bmatrix}\in\mathbb{R}^{1\times |\Pcal_a|},\\
\Xicalbf_a &=\begin{bmatrix}
\Xi_a^{(1)} & \cdots & \Xi_a^{(|\Pcal_a|)}
\end{bmatrix}\in\mathbb{R}^{X\times |\Pcal_a|},\\
\Xcalbf_a &=\begin{bmatrix}
\mathcal{X}_a^{(1)} & \cdots & \mathcal{X}_a^{(|\Pcal_a|)}
\end{bmatrix}\in\mathbb{R}^{X\times |\Pcal_a|},\\
\Zcalbf_a &=\begin{bmatrix}
\mathcal{Z}_a^{(1)} & \cdots & \mathcal{Z}_a^{(|\Pcal_a|)}
\end{bmatrix}\in\mathbb{R}^{Z\times |\Pcal_a|}.
\end{align}
\end{subequations}
We also define the combined matrices,
\begin{subequations}\label{eq:fulldef}
\begin{align}
\wvec &= 
\begin{bmatrix}
\wvec_c&\wvec_z&\wvec_l
\end{bmatrix}, & \Wcalbf &= \text{diag}(\wvec),\\
\Xicalbf &= 
\begin{bmatrix}
\Xicalbf_c&\Xicalbf_z&\Xicalbf_l
\end{bmatrix}, & \Wcalbf_z &= \text{diag}(\wvec_z),\\
\Xcalbf &= 
\begin{bmatrix}
\Xcalbf_c&\Xcalbf_z&\Xcalbf_l
\end{bmatrix},& \Zcalbf &= 
\begin{bmatrix}
\Zcalbf_c&\Zcalbf_z &\Zcalbf_l
\end{bmatrix}.
\end{align}
\end{subequations}}
These definitions are illustrated in Figure~\ref{fig:definitions}, where a set of 5 integration points are plotted along the last nonlinear-state dimension and the first linear state dimension. Here, we note that by virtue of the lower triangular decomposition in~\eqref{eq:lowertriangulardecomp}, the points computed $\Xcal_l^{(1)}, \Xcal_l^{(2)}$ do not deviate from the central point $\Xcal_c^{(1)}$ in the part of the state-space that serves as the input to the nonlinear function $\gvec$ defining $\fullnonlinfunc$. As a result, the sum of the output points $\Ycal^{(4)}$ and $\Ycal^{(5)}$ is equal to $\Ycal^{(1)}$, and this is precisely what we will exploit in the following section. Again, this is made possible by the choice of the lower triangular decomposition. Taking a dense symmetric decomposition used in defining the change of variables in~\eqref{eq:approx} implies that we need to evaluate the nonlinear function $\gvec$ in five different points along $\evec_Z^X$ (green in Figure~\ref{fig:definitions}), whereas using a triangular decomposition requires evaluation in $\gvec$ in three points.

\begin{figure*}[t]
    \centering
    \includegraphics[width=\textwidth]{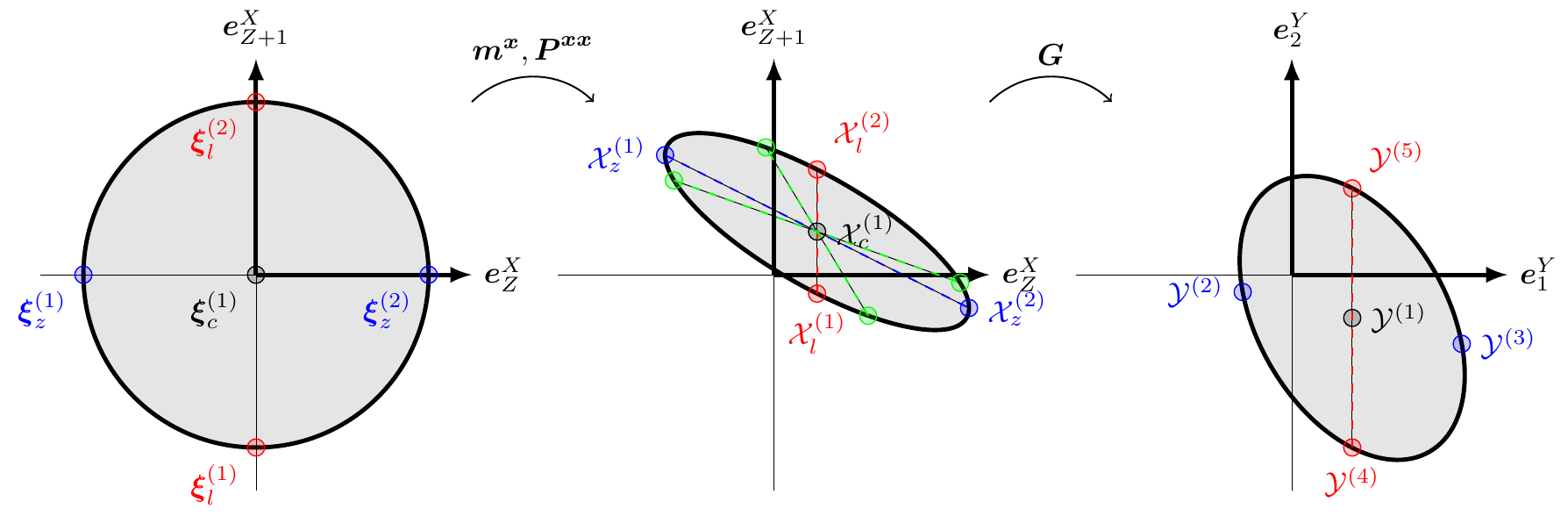}
    \caption{Illustration of the set $\Xicalbf$ partitioned into the sets $\Pcal_c, \Pcal_z, \Pcal_l$ for the unscented transform (left). Resulting integration points $\Xcal$, here computed with a Cholesky decomposition $\Lxx$ (blue/red), and also with a symmetric square root factorization analogous to Matlab's \texttt{sqrtm()} function (green).}
    \label{fig:definitions}
\end{figure*}

\newpage\section{Main result}\label{sec:results}
We now state some basic properties common to the prior mentioned cubature rules, which will be verified for the SC, UT, GHC, but certainly encompassing more of the LRKFs. We will then proceed to use these properties in the simplification of the evaluation of the joint distributions.

\begin{assumption}\label{rem:symmetry}
Symmetry, such that $\forall(w^{(i)},\xivec^{(i)})\in\Pcal_z\cup\Pcal_l$, $\exists(w^{(j)},\xivec^{(j)})\in\Pcal_z\cup\Pcal_l$ s.t. $(w^{(i)},\xivec^{(i)}) = (w^{(j)},-\xivec^{(j)})$.
\end{assumption}

\begin{assumption}\label{rem:productFirst}
Consistency in the first moment for linear maps, with $\sum_{(w,\xivec)\in\Pcal} w = 1$.
\end{assumption}

\begin{assumption}\label{rem:product}
Consistency in the second moment for linear maps, with $\Xicalbf\Wcalbf\Xicalbf^{\top} =\I$.
\end{assumption}

\begin{rem}
The second and third assumption should hold for any conventional LRKF, and will be demonstrated explicitly for the integration point sets defining the considered cubature rules. The first assumption holds for any of the more common LRKFs, and can be enforced as a constraint in the optimization in the Smart Sampling Kalman filter (S$^2$KF) in~\cite{steinbring2014lrkf}). The asusmptions can also be shown to hold for the SIR in the RUKF, but here omitted for brevity.
\end{rem}

\subsection{The spherical cubature rule}\label{sec:CKF}
The spherical cubature rule used in the CKF, originally presented in~\cite{arasaratnam2009cubature}, scales as $|\Pcal| = C(X) = 2X$, and is defined by a set of points where $\Pcal_c = \emptyset$ implying that $\Pcal = \Pcal_z\cup\Pcal_l$, and the weights and integration points are given by
\begin{equation*}
\wvec = (2X)^{-1}\mathbf{1}_{2X}^{\top}, \qquad \Xicalbf = \sqrt{X}
\begin{bmatrix}
\I&-\I
\end{bmatrix}.
\end{equation*}
\begin{rem}
Assumption~\ref{rem:symmetry} clearly holds. As $\sum_{(w,\xivec)\in\Pcal} w = 2X(2X)^{-1} = 1$, and $\Xicalbf\Wcalbf\Xicalbf^{\top} = (2X)^{-1}(X\I + X\I) = \I$, showing that both Assumptions~\ref{rem:productFirst} and~\ref{rem:product} also hold. Note that this cubature rule has no central point, with $\Pcal_c = \emptyset$.
\end{rem}

\subsection{The unscented transform}\label{sec:UKF}
The unscented transform, used to define the celebrated UKF filter originally presented in~\cite{julier1997new}, scales as $|\Pcal| = C(X) = 2X + 1$. This transform is determined by a length scale parameter $\lambda = \alpha^2(X + \kappa) - X$, for some constant $\alpha,\kappa > 0$. The UT has one central integration point, with $\Pcal_c = (w_c^{(1)},\xivec_c^{(1)})=(\lambda(\lambda + X)^{-1},\Z)$, and $\Pcal$ is given by
\begin{equation*}
\wvec \hspace{-2pt}=\hspace{-2pt} \frac{1}{\lambda + X}\begin{bmatrix}
\lambda & \frac{1}{2}\mathbf{1}_{2X}^{\top}
\end{bmatrix}, \;\; \Xicalbf\hspace{-2pt}=\hspace{-2pt}
\sqrt{(\lambda \hspace{-2pt}+\hspace{-2pt} X)}\begin{bmatrix}
\Z_{X\times 1}\hspace{-2pt}&\hspace{-2pt}\I_X\hspace{-2pt}&\hspace{-2pt}-\I_X
\end{bmatrix}
\end{equation*}

\begin{rem}
For every point in $\Pcal_z\cup\Pcal_l$, Assumption~\ref{rem:symmetry} holds. Furthermore, $\sum_{(w,\xivec)\in\Pcal} w = \lambda(\lambda + X)^{-1} + 2X(2(\lambda + X))^{-1} = (\lambda + X)(\lambda + X)^{-1} = 1$, thus Assumption~\ref{rem:productFirst} holds. Finally, as $\Xicalbf\Wcalbf\Xicalbf^{\top} = \lambda(\lambda + X)^{-1}\Z_{X\times X} +(2(\lambda + X))^{-1}((\lambda + X)\I_X + (\lambda + X)\I_X) = \I_X$, Assumption~\ref{rem:product} holds. 
\end{rem}

\subsection{The Gauss-Hermite cubature rule}\label{sec:GHKF}
Another form of cubature is the GHC, to our knowledge first presented in~\cite{golub1969calculation} and used to construct the first GHKF in~\cite{ito2000gaussian}, comprehensively summarized in~\cite{sarkka2013bayesian}. Here, $H_p(x)$ denotes the probabilistic Hermite polynomial of order $p$, as
\begin{equation}
H_0\hspace{-1.5pt}=\hspace{-1.5pt}1, \;\; H_1\hspace{-1.5pt}=\hspace{-1.5pt}x, \;\; H_{p+1}(x)\hspace{-1.5pt}=\hspace{-1.5pt}xH_{p}(x)\hspace{-1.5pt}-\hspace{-1.5pt}pH_{p-1}(x),
\end{equation}
and $r_p^i$ denotes the $i^{th}$ root of $H_{p}(x)$. For each root, we compute an associated number
\begin{equation}
\alpha_p^i = p!(pH_{p-1}(r_p^i)))^{-2}.
\end{equation}
In the one-dimensional quadrature, $r_p^i$ and $\alpha_p^i$ would form the set of integration points and weights. However, this can easily be extended to a multi-dimensional cubature, then with the set of weights and points defined as
\begin{equation*}
\Pcal \hspace{-2pt}= \hspace{-2pt} \Bigg\{\Bigg(\prod_{j = 1}^X\alpha_p^{k_j}, \sum\limits_{j = 1}^Xr_p^{k_j}\evec_j^X\Bigg)|k_j  \hspace{-2pt}\in \hspace{-2pt} \{1,...,p\} \forall j  \hspace{-2pt}\in \hspace{-2pt} \{1,...,X\}\Bigg\}.
\end{equation*}
Thus, the number of integration points scales exponentially with the dimension of the state, with $C(X) = p^X$.
\begin{rem}
The Hermite polynomials are symmetric, that is, for every $r_p^i\neq 0$, $\exists r_p^j$ such that $r_p^i = -r_p^j$ and since $H_p(x) = (-1)^pH_p(-x)$, we have that $\alpha_p^i = \alpha_p^j$ for each such pair of roots. Thus, for every point in $\Pcal_z\cup\Pcal_l$, Assumption~\ref{rem:symmetry} clearly holds. To see that Assumption~\ref{rem:product} is satisfied, let $\gvec(\xvec) = \xvec$ and take $\mx = \Z$ and $\Pxx = \I$. In this case, $\Pyy = \I$, and, since the $p^{th}$-order Gauss-Hermite cubature rule is exact for polynomials of order $p$~\cite{ito2000gaussian}, and the second moment in this case is a second order polynomial in $\xvec$ with a Gaussian weight function, we have that
\replaceinpaper{
\begin{equation}
\I = \Pyy = \int_{\mathbb{R}^X}(\I\xvec - \mx)(\I\xvec - \mx)^{\top}\mathcal{N}(\xvec|\Z,\I)\text{d}\xvec = \sum\limits_{i = 1}^{|\Pcal|}w^{(i)}\xivec^{(i)}(\xivec^{(i)})^{\top}
 =  \Xicalbf\Wcalbf\Xicalbf^{\top},
\end{equation}
}{
$\Xicalbf\Wcalbf\Xicalbf^{\top} = \I$
}
for any GHC of order $p \geq 2$. Similarly, Assumption~\ref{rem:productFirst} holds for all $p \geq 1$.
\end{rem}

\subsection{Exploiting the linear substructure}
We will now attempt to exploit the linear substructure in~\eqref{eq:func} under the assumption that conditions~\ref{rem:symmetry},~\ref{rem:productFirst} and~\ref{rem:product} all hold. The key idea here is to partition the state as done in~\eqref{eq:distX}, and then use the lower-triangular Cholesky decomposition in the change of variables in~\eqref{eq:approx}, and subsequent evaluation of the set $\Xcal$, instead of a symmetric square root factorization, otherwise commonly used in the LRKFs. This special structure ensures that for any element in the set $\Pcal_c\cup\Pcal_l$, we have that $\mathbf{N}\Ycal^{(i)} = \mathbf{N}\fullnonlinfunc(\Xcal^{(i)}) = \nonlinfunc(\Zcal^{(i)}) = \nonlinfunc(\mn)$. This property does not hold for the symmetric square root factorization,  as illustrated in Figure~\ref{fig:definitions}, and with it in mind, we can proceed by attacking the expressions in~\eqref{eq:approxmomentint} using Assumptions~\ref{rem:symmetry}, \ref{rem:productFirst}, and \ref{rem:product}. The result is stated in Proposition~\ref{thm:main} and the Corollaries~\ref{cor:compgains}-\ref{cor:otherstruct}, with proofs in the Appendix.

\begin{prop}\label{thm:main}
For the state in~\eqref{eq:distX}, its joint distribution with output of the structured nonlinear function in~\eqref{eq:func},
\begin{equation}
\mathcal{N}\Bigg(
\begin{bmatrix}
\xvec\\\yvec
\end{bmatrix}\Bigg|
\begin{bmatrix}
\mx\\\my
\end{bmatrix},
\begin{bmatrix}
\Pxx&\Pxy\\\Pyx&\Pyy
\end{bmatrix}
\Bigg),
\end{equation}
computed using a cubature rule defined by a set of points and weights $\Pcal = \{(w^{(i)},\xivec^{(i)})\}_{i = 1}^{C(X)}$ satisfying assumptions~\ref{rem:symmetry},~\ref{rem:productFirst}, and ~\ref{rem:product}, the moments of the joint distribution is given by
\replaceinpaper{
\begin{subequations}\label{eq:jointThmMain}
\begin{align}
\my &=\begin{bmatrix} w_{cl}\nonlinfunc(\mn)+
\Gcalbf_z\wvec_z^{\top}\\
\A\mx
\end{bmatrix},\\
{\Pxy}^{\top}&=
\begin{bmatrix}
\Gcalbf_z\Wcalbf_z\Xicalbf_z^{\top}{\Lxx}^{\top}\\
\A \Pxx
\end{bmatrix},\\
\Pyy&=
\begin{bmatrix}
(\Gcalbf_z+\Ccalbf_z)\Wcalbf_z(\Gcalbf_z+\Ccalbf_z)^{\top}+w_{cl}\uvec_{\lvec}\uvec_{\lvec}^{\top} &
\Gcalbf_z\Wcalbf_z\Xicalbf_z^{\top} {\Lxx}^{\top}\A^{\top}\\
\A \Lxx \Xicalbf_z\Wcalbf_z\Gcalbf_z^{\top} & 
\A \Pxx\A^{\top}
\end{bmatrix},
\end{align}
\end{subequations}
}{
\begin{align}\label{eq:jointThmMain}
\my &\hspace{-2pt}=\hspace{-2pt}\begin{bmatrix} w_{cl}\nonlinfunc(\mn)+
\Gcalbf_z\wvec_z^{\top}\\
\A\mx
\end{bmatrix},\\\nonumber
\Pxy&\hspace{-2pt}=\hspace{-2pt}
\begin{bmatrix}
\Lxx\Xicalbf_z\Wcalbf_z\Gcalbf_z^{\top} &  \Pxx\A^{\top}
\end{bmatrix},\\
\Pyy&\hspace{-2pt}=\hspace{-2pt}
\begin{bmatrix}
(\Gcalbf_z\hspace{-2pt}+\hspace{-2pt}\Ccalbf_z)\Wcalbf_z(\Gcalbf_z\hspace{-2pt}+\hspace{-2pt}\Ccalbf_z)^{\top}\hspace{-2pt}+\hspace{-2pt}w_{cl}\uvec_{\lvec}\uvec_{\lvec}^{\top} &
\star\\
\A \Lxx \Xicalbf_z\Wcalbf_z\Gcalbf_z^{\top} & 
\A \Pxx\A^{\top}
\end{bmatrix},\nonumber
\end{align}}
with $\Pcal_z,\wvec_z,\Wcalbf_z, \Xicalbf_z$ given in~\eqref{eq:Pdef}-\eqref{eq:fulldef}, and
\begin{subequations}
\begin{align}
w_{cl} &= 1 - \wvec_z \mathbf{1}_{|\Pcal_z|},\\
\Gcalbf_z &= \begin{bmatrix}
\nonlinfunc(\Zcal^{(1)}) & \cdots & \nonlinfunc(\Zcal^{(|\Pcal_z|)})
\end{bmatrix},\\
\uvec &= -w_{cl}\nonlinfunc(\mn) - \Gcalbf_z\wvec_z^{\top}\\
\uvec_{\lvec}&= \nonlinfunc(\mn) + \uvec.\\
\Ccalbf_z &= \uvec\mathbf{1}_{|\Pcal_z|}^{\top}.
\end{align}
\end{subequations}
\end{prop}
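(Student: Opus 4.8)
The plan is to start from the finite-sum approximations in~\eqref{eq:approxmomentint} and exploit the block structure $\mathcal{Y}^{(i)} = [\nonlinfunc(\mathcal{Z}^{(i)})^\top,\ (\A\mathcal{X}^{(i)})^\top]^\top$ forced by~\eqref{eq:func}, so that each of $\my$, $\Pxy$, $\Pyy$ decomposes into a block aligned with the nonlinear output $\nonlinfunc$ and a block aligned with the linear map $\A$. Before touching the moments I would record two elementary facts. First, the lower-triangular factorization~\eqref{eq:lowertriangulardecomp} gives the identity $\mathbf{N}\Lxx\xivec = \Lnn(\mathbf{N}\xivec)$; in particular, for every point in $\Pcal_c\cup\Pcal_l$ one has $\mathbf{N}\xivec = \Z$, hence $\mathcal{Z}^{(i)} = \mn$ and $\nonlinfunc(\mathcal{Z}^{(i)}) = \nonlinfunc(\mn)$. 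This is the crucial consequence of using the Cholesky factor rather than a symmetric square root, and it is exactly the property highlighted around Figure~\ref{fig:definitions}. Second, since $\Pcal_z$ and $\Pcal_l$ are each closed under the sign flip of Assumption~\ref{rem:symmetry} (both being defined through $\mathbf{N}\xivec$, the mirror of a $z$-point is a $z$-point and the mirror of an $l$-point is an $l$-point), the weighted point sum $\sum_i w^{(i)}\xivec^{(i)}$ vanishes over $\Pcal_z$, over $\Pcal_l$, and trivially over $\Pcal_c$.

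With these in hand the first moment is immediate. For the linear block, $\sum_i w^{(i)}\A\mathcal{X}^{(i)} = \A\mx\sum_i w^{(i)} + \A\Lxx\sum_i w^{(i)}\xivec^{(i)} = \A\mx$ by Assumption~\ref{rem:productFirst} and the vanishing point sum. For the nonlinear block I would split the sum over $\Pcal_c$, $\Pcal_l$, $\Pcal_z$: the central and linear points collapse to $\nonlinfunc(\mn)$ by the first recorded fact and contribute $w_{cl}\nonlinfunc(\mn)$ with $w_{cl} = 1 - \wvec_z\mathbf{1}_{|\Pcal_z|}$ (Assumption~\ref{rem:productFirst}), while the nonlinear points give $\Gcalbf_z\wvec_z^\top$.

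For $\Pxy$ I would pull $\mathcal{X}^{(i)}-\mx = \Lxx\xivec^{(i)}$ out on the left. The linear output block is $\Lxx\big(\sum_i w^{(i)}\xivec^{(i)}(\xivec^{(i)})^{\top}\big)\Lxx^\top\A^\top = \Lxx\Xicalbf\Wcalbf\Xicalbf^\top\Lxx^\top\A^\top = \Pxx\A^\top$ by Assumption~\ref{rem:product} and~\eqref{eq:lowertriangulardecomp}. The nonlinear output block is $\Lxx\sum_i w^{(i)}\xivec^{(i)}(\nonlinfunc(\mathcal{Z}^{(i)})-\boldsymbol{\mu})^\top$, writing $\boldsymbol{\mu} = w_{cl}\nonlinfunc(\mn)+\Gcalbf_z\wvec_z^\top$ for the nonlinear block of $\my$ (so $\uvec=-\boldsymbol{\mu}$); the central points drop since $\xivec=\Z$, on the linear points the factor $\nonlinfunc(\mathcal{Z}^{(i)})-\boldsymbol{\mu}$ is the constant $\nonlinfunc(\mn)-\boldsymbol{\mu}$ so that part reduces to $\big(\sum_{i\in l} w^{(i)}\xivec^{(i)}\big)(\nonlinfunc(\mn)-\boldsymbol{\mu})^\top = \Z$, and on $\Pcal_z$ the subtracted $\boldsymbol{\mu}$ is killed by the same symmetry, leaving $\Lxx\Xicalbf_z\Wcalbf_z\Gcalbf_z^\top$. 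The covariance $\Pyy$ then reuses all of this: its linear-linear block is $\A\Lxx\Xicalbf\Wcalbf\Xicalbf^\top\Lxx^\top\A^\top = \A\Pxx\A^\top$, and its linear-nonlinear block is $\A$ times the nonlinear block of $\Pxy$, namely $\A\Lxx\Xicalbf_z\Wcalbf_z\Gcalbf_z^\top$.

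I expect the nonlinear-nonlinear block of $\Pyy$ to be the main obstacle, since it is where all the bookkeeping concentrates. Collecting the centered outputs as the columns of $\Gcalbf_z-\boldsymbol{\mu}\mathbf{1}_{|\Pcal_z|}^\top$ over $\Pcal_z$ and as the constant $\nonlinfunc(\mn)-\boldsymbol{\mu}$ over $\Pcal_c\cup\Pcal_l$, I would use $\uvec=-\boldsymbol{\mu}$ and $\uvec_{\lvec}=\nonlinfunc(\mn)-\boldsymbol{\mu}$: the central/linear contribution is the rank-one term $w_{cl}\uvec_{\lvec}\uvec_{\lvec}^\top$, while the nonlinear contribution becomes $(\Gcalbf_z+\Ccalbf_z)\Wcalbf_z(\Gcalbf_z+\Ccalbf_z)^\top$ once $-\boldsymbol{\mu}\mathbf{1}_{|\Pcal_z|}^\top$ is recognized as $\Ccalbf_z=\uvec\mathbf{1}_{|\Pcal_z|}^\top$. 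The only real care required is in tracking the constant offset $\boldsymbol{\mu}$ consistently across the three categories and collapsing the resulting weighted sums into the compact matrix products; everything else is driven by the three assumptions and the single structural identity $\mathbf{N}\Lxx\xivec=\Lnn(\mathbf{N}\xivec)$.
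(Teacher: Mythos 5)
Your proposal is correct and follows essentially the same route as the paper's proof: the structural identity $\mathbf{N}\Lxx\xivec=\Lnn(\mathbf{N}\xivec)$ collapsing the central and linear points onto $\nonlinfunc(\mn)$, the symmetry-induced vanishing of the weighted point sums over $\Pcal_z$ and $\Pcal_l$ separately, Assumption~\ref{rem:productFirst} for the means, and Assumption~\ref{rem:product} to recover $\Pxx$ in the linear blocks are exactly the ingredients the paper uses, differing only in that you work block-by-block while the paper first assembles the full centered matrices $\Xcalbf-\mx\mathbf{1}^{\top}$ and $\Ycalbf-\my\mathbf{1}^{\top}$ and then simplifies the product $\Xicalbf\Wcalbf(\Gcalbf+\Ccalbf)^{\top}$.
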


\begin{cor}\label{cor:compgains}
If, $\forall(w^{(i)},\xivec^{(i)})\in\Pcal_z\cup\Pcal_l$, $\exists(w^{(j)},\xivec^{(j)})\in\Pcal_z\cup\Pcal_l$ s.t. $(w^{(i)},\N\xivec^{(i)},\bar{\N}\xivec^{(i)}) = (w^{(j)},{\N}\xivec^{(j)},-\bar{\N}\xivec^{(j)})$, which is clearly satisfied in the SC, UT, and GHC, then Proposition~\ref{thm:main} only requires evaluation of the nonlinear function $\nonlinfunc$ in $C(Z)$ points instead of evaluating the function $\fullnonlinfunc$ in $C(X)$ points as done in the original cubature rules.
\end{cor}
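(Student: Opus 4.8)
The plan is to show that every call to $\nonlinfunc$ demanded by Proposition~\ref{thm:main} is made at one of only $C(Z)$ distinct arguments. First I would exploit the block-lower-triangular form of $\Lxx$ in~\eqref{eq:lowertriangulardecomp}. Since $\N=\begin{bmatrix}\I_Z & \Z_{Z\times L}\end{bmatrix}$ and the upper-right block of $\Lxx$ vanishes, $\N\Lxx=\begin{bmatrix}\Lnn & \Z_{Z\times L}\end{bmatrix}$, so that for any integration point
\[
\Zcal^{(i)}=\N\Xcal^{(i)}=\mn+\N\Lxx\,\xivec^{(i)}=\mn+\Lnn\,\N\xivec^{(i)}.
\]
Hence $\nonlinfunc(\Zcal^{(i)})$ depends on $\xivec^{(i)}$ only through its nonlinear component $\N\xivec^{(i)}$, while the linear component $\bar{\N}\xivec^{(i)}$ is irrelevant to the argument of $\nonlinfunc$. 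This is exactly the property that the lower-triangular factorization was chosen to supply, as anticipated in the discussion surrounding Figure~\ref{fig:definitions}.

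Second, I would dispose of the central and linear points. By the definitions in~\eqref{eq:Pdef}, every $(w,\xivec)\in\Pcal_c\cup\Pcal_l$ satisfies $\N\xivec=\Z$, so the previous step gives $\Zcal^{(i)}=\mn$ and $\nonlinfunc(\Zcal^{(i)})=\nonlinfunc(\mn)$. All such points therefore contribute the single evaluation $\nonlinfunc(\mn)$, which is precisely how it enters Proposition~\ref{thm:main} through the term $w_{cl}\nonlinfunc(\mn)$. The only genuinely nonlinear evaluations that remain are the columns of $\Gcalbf_z$, indexed by $\Pcal_z$.

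Third, and this is \emph{the crux}, I would bound the number of distinct arguments in $\{\mn\}\cup\{\Zcal^{(i)}:(w^{(i)},\xivec^{(i)})\in\Pcal_z\}$ by $C(Z)$. Here the additional symmetry hypothesis does the work: it pairs each point of $\Pcal_z\cup\Pcal_l$ with another of equal weight and equal nonlinear component but opposite linear component, so that points differing only in their linear coordinates collapse onto the common argument $\mn+\Lnn\,\N\xivec^{(i)}$. It then remains to count the distinct values of $\N\xivec^{(i)}$, which I would do case by case as the corollary itself suggests. For the SC these are $\{\pm\sqrt{X}\,\evec_i^Z\}_{i=1}^{Z}$, giving $2Z$ values; for the UT one adjoins the origin, giving $2Z+1=C(Z)$; and for the GHC the nonlinear components range over the $Z$-dimensional grid $\{\sum_{j\le Z}r_p^{k_j}\evec_j^Z\}$, giving $p^Z=C(Z)$, since the linear indices $k_{Z+1},\dots,k_X$ leave $\N\xivec$ untouched.

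The main obstacle I anticipate is precisely this counting step. The bound is not genuinely dimension-free: the symmetry hypothesis guarantees the pairing but not, on its own, that the number of distinct nonlinear components equals $C(Z)$; that identity rests on the product or axis-aligned structure of each specific rule, which is why the honest argument verifies the SC, UT, and GHC individually rather than abstractly. A second subtlety is bookkeeping at the center: for the UT and GHC the central argument $\mn$ is already one of the $C(Z)$ points, whereas the SC carries no central point yet still needs $\nonlinfunc(\mn)$ whenever $w_{cl}=L/X\neq 0$, so one must read the stated count $C(Z)$ up to this single shared central evaluation.
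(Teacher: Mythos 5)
Your proposal is correct and follows essentially the same route as the paper's proof: the lower-triangular factorization makes the argument of $\nonlinfunc$ depend only on $\N\xivec^{(i)}$, the symmetry hypothesis collapses points sharing a nonlinear component onto a unique point set $\Pcal_z^u$, and the identity $|\Pcal_z^u|=C(Z)$ is checked rule by rule --- your caveat that this last count is rule-specific rather than a consequence of the symmetry hypothesis alone is accurate and is equally tacit in the paper. The one step the paper carries out that you only gesture at is verifying that the weighted sums in Proposition~\ref{thm:main} involving the full $\Xi_z^{(i)}$ --- in particular $\Xicalbf\Wcalbf(\Gcalbf+\Ccalbf)^{\top}$, which feeds $\Pxy$ and the off-diagonal block of $\Pyy$ --- reduce to sums over $\Pcal_z^u$ with zeroed linear components; this is where the paired cancellation of $\bar{\N}\xivec^{(i)}$ is actually consumed, and it is needed if one wants to run the reduced point set in Algorithm~\ref{alg:LSLRKF} rather than merely cache repeated evaluations of $\nonlinfunc$.
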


\begin{rem}
The proof idea in Corollary~\ref{cor:compgains} consists of defining a point set where only unique columns of $\Xicalbf_z$ are considered, adjusting the associated set of weights accordingly, and resulting in a modified unique point set $\Pcal_z^u$. With this point set, and granted Assumptions 1-3, evaluation of Proposition 1 becomes identical when using $\Pcal_z$ and $\Pcal_z^u$.
\end{rem}

\begin{cor}\label{cor:A}
If $\bar{\mathbf{N}}\Xicalbf_z = \Z$, as is the case with the SC, UT, as well as the GHC (if considering its unique point set $\Pcal_z^u$), then the joint distribution~\eqref{eq:jointThmMain} in Proposition~\ref{thm:main} can be evaluated with a partial column-wise Cholesky decomposition of $\Pxx$. Only the first $Z$ columns of $\Lxx$ are needed.
\end{cor}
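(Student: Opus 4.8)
\textbf{Proof proposal for Corollary~\ref{cor:A}.}
The plan is to locate every occurrence of the Cholesky factor $\Lxx$ in the closed-form moments of Proposition~\ref{thm:main} and to show that, under the hypothesis $\bar{\mathbf{N}}\Xicalbf_z=\Z$, each such occurrence touches only the first $Z$ columns $\begin{bmatrix}\Lnn\\ \Lln\end{bmatrix}$ of $\Lxx$ and never the trailing block $\begin{bmatrix}\Z\\ \Lll\end{bmatrix}$. Scanning \eqref{eq:jointThmMain}, the factor $\Lxx$ enters in exactly two ways: through the image matrix $\Gcalbf_z$, whose columns are $\nonlinfunc(\Zcal_z^{(i)})$ with $\Zcal_z^{(i)}=\mathbf{N}\mathcal{X}_z^{(i)}=\mn+\mathbf{N}\Lxx\Xi_z^{(i)}$, and through the products $\Lxx\Xicalbf_z$ (and its transpose) appearing in $\Pxy$ and the off-diagonal blocks of $\Pyy$. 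The remaining data, namely $\Pxx$, $\A\mx$, and the auxiliary quantities $w_{cl},\uvec,\uvec_{\lvec},\Ccalbf_z$, are either the given input covariance (requiring no factorization) or inherit their $\Lxx$-dependence solely through $\Gcalbf_z$; it therefore suffices to treat these two objects.

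First I would use the lower-triangular form \eqref{eq:lowertriangulardecomp} to compute $\mathbf{N}\Lxx=[\Lnn\;\Z]$, so that $\Zcal_z^{(i)}=\mn+\Lnn(\mathbf{N}\Xi_z^{(i)})$ and hence $\Gcalbf_z$, together with all quantities derived from it, depends on $\Lxx$ only through the top-left block $\Lnn$. Next I would invoke the hypothesis to partition the sigma points as $\Xicalbf_z=\begin{bmatrix}\mathbf{N}\Xicalbf_z\\ \bar{\mathbf{N}}\Xicalbf_z\end{bmatrix}=\begin{bmatrix}\mathbf{N}\Xicalbf_z\\ \Z\end{bmatrix}$ and multiply out
\[
\Lxx\Xicalbf_z
=\begin{bmatrix}\Lnn & \Z\\ \Lln & \Lll\end{bmatrix}
\begin{bmatrix}\mathbf{N}\Xicalbf_z\\ \Z\end{bmatrix}
=\begin{bmatrix}\Lnn\,\mathbf{N}\Xicalbf_z\\ \Lln\,\mathbf{N}\Xicalbf_z\end{bmatrix}.
\]
The trailing block $\begin{bmatrix}\Z\\ \Lll\end{bmatrix}$ is annihilated by the vanishing linear rows of $\Xicalbf_z$, so $\Lxx\Xicalbf_z$, and therefore $\Pxy$ and the off-diagonal of $\Pyy$, involves only $\Lnn$ and $\Lln$, which are precisely the first $Z$ columns of $\Lxx$.

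It remains to translate \emph{first $Z$ columns suffice} into \emph{a partial column-wise Cholesky suffices}. Here I would recall that in a column-by-column factorization the $j$-th column of $\Lxx$ is produced from the $j$-th column of $\Pxx$ and the already-computed columns $1,\dots,j-1$; consequently the first $Z$ columns $\begin{bmatrix}\Lnn\\ \Lln\end{bmatrix}$ are obtained from $\Pzz$ and $\Pzl$ alone, giving $\Lnn=\mathrm{chol}(\Pzz)$ and $\Lln=\Pzl^{\top}\Lnn^{-\top}$, and one may halt the recursion after column $Z$ without ever forming $\Lll=\mathrm{chol}(\Pll-\Lln\Lln^{\top})$. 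The main obstacle is not any single computation but the bookkeeping: one must verify term-by-term that $\Lll$ is absent from all three moments, and confirm that the cited rules (SC, UT, and GHC on its unique set $\Pcal_z^u$) indeed satisfy $\bar{\mathbf{N}}\Xicalbf_z=\Z$. The conceptual crux, and the reason the result fails for a symmetric square root, is that the cancellation rests on the simultaneous vanishing of the top-right block of the \emph{triangular} $\Lxx$ and of the linear rows of $\Xicalbf_z$, exactly the structure anticipated by the lower-triangular choice discussed around Figure~\ref{fig:definitions}.
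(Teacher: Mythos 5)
Your proof is correct and rests on exactly the same computation as the paper's: under $\bar{\mathbf{N}}\Xicalbf_z=\Z$ one has $\Lxx\Xicalbf_z=\bigl[\begin{smallmatrix}\Lnn\\ \Lln\end{smallmatrix}\bigr]\mathbf{N}\Xicalbf_z$, so only the first $Z$ columns of $\Lxx$ are touched. Your additional bookkeeping (that $\Gcalbf_z$ depends on $\Lxx$ only through $\Lnn$ via $\mathbf{N}\Lxx=[\Lnn\;\Z]$, and that the Cholesky--Crout recursion can be halted after column $Z$) is left implicit in the paper but is a welcome completion of the argument.
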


\begin{cor}\label{cor:otherstruct}
The results for the substructure in~\eqref{eq:distX} can be generalized to structured nonlinear functions on the form
\begin{equation}\label{eq:otherstruct}
\yvec = \begin{bmatrix}
\A_1\xvec + \gvec(\zvec)\\
\A_2\xvec
\end{bmatrix}.
\end{equation}
\end{cor}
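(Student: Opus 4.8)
The plan is to reduce~\eqref{eq:otherstruct} to the structure already treated in Proposition~\ref{thm:main} by peeling off the extra linear term. The output in~\eqref{eq:otherstruct} decomposes as
\begin{equation*}
\begin{bmatrix}\A_1\xvec + \gvec(\zvec)\\ \A_2\xvec\end{bmatrix} = \G(\xvec) + \bar{\A}\xvec,\qquad \G(\xvec) = \begin{bmatrix}\gvec(\zvec)\\ \A_2\xvec\end{bmatrix},\qquad \bar{\A} = \begin{bmatrix}\A_1\\ \Z\end{bmatrix},
\end{equation*}
where $\G$ is exactly of the form~\eqref{eq:func} with linear block $\A = \A_2$ and nonlinear part $\gvec$ in the role of $\nonlinfunc$. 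Thus the new output equals the original output plus a purely linear map $\bar{\A}\xvec$ of the input, and at the level of the joint density the pair $(\xvec, \yvec')$ is the image of $(\xvec, \yvec)$ under the linear map $(\xvec,\yvec)\mapsto(\xvec, \yvec + \bar{\A}\xvec)$.

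The crux is that the cubature approximation commutes with this output-side linear map, so that adding $\bar{\A}\xvec$ introduces no new error. This rests on two identities following from Assumptions~\ref{rem:symmetry}--\ref{rem:product}: the weighted points satisfy $\Xicalbf\wvec^{\top} = \Z$ (the central point vanishes and the $\Pcal_z\cup\Pcal_l$ points cancel pairwise by symmetry), whence $\sum_i w^{(i)}\Xcal^{(i)} = \mx$ using $\sum_i w^{(i)} = 1$; and $\sum_i w^{(i)}(\Xcal^{(i)} - \mx)(\Xcal^{(i)} - \mx)^{\top} = \Lxx\Xicalbf\Wcalbf\Xicalbf^{\top}\Lxx^{\top} = \Pxx$ by Assumption~\ref{rem:product}. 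Hence the empirical mean and covariance of the added linear part are matched exactly.

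Given these, I would substitute $\G'(\Xcal^{(i)}) - \my' = (\G(\Xcal^{(i)}) - \my) + \bar{\A}(\Xcal^{(i)} - \mx)$ into the sums~\eqref{eq:approxmomentint} and expand; the cross terms collapse via the two identities, giving the exact transformation rules
\begin{align*}
\my' &= \my + \bar{\A}\mx, & \Pxy' &= \Pxy + \Pxx\bar{\A}^{\top},\\
\Pyy' &= \Pyy + \bar{\A}\Pxy + \Pyx\bar{\A}^{\top} + \bar{\A}\Pxx\bar{\A}^{\top}.
\end{align*}
Inserting the closed forms of $\my,\Pxy,\Pyy$ from Proposition~\ref{thm:main} then yields the moments for~\eqref{eq:otherstruct}. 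Since $\bar{\A}$ perturbs only the first output block and involves only the known matrices $\A_1$ and $\Pxx$, the nonlinear content $\Gcalbf_z,\wvec_z,w_{cl}$ is unchanged, and the savings of Corollaries~\ref{cor:compgains} and~\ref{cor:A} carry over verbatim.

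The main obstacle is bookkeeping rather than conceptual: Assumption~\ref{rem:symmetry} is stated only over $\Pcal_z\cup\Pcal_l$, so one must argue cleanly that the central point contributes $\Z$ to $\Xicalbf\wvec^{\top}$ and that the remaining points cancel in pairs, and one must track the block structure of $\bar{\A}$ so the perturbation lands only where intended. An equivalent and perhaps tidier route is to regard the intermediate output $\begin{bmatrix}\gvec(\zvec)^{\top} & (\A_1\xvec)^{\top} & (\A_2\xvec)^{\top}\end{bmatrix}^{\top}$ as an instance of~\eqref{eq:func} with $\A = \begin{bmatrix}\A_1^{\top}&\A_2^{\top}\end{bmatrix}^{\top}$, apply Proposition~\ref{thm:main} directly, and then act on the result with the deterministic summation matrix that adds the first two blocks; since that map is exact on any output, no additional assumptions are needed.
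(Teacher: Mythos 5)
Your proposal is correct, and in fact it contains two proofs: the ``tidier route'' you sketch in your final sentences --- augmenting the output to $\begin{bmatrix}\gvec(\zvec)^{\top} & (\A_1\xvec)^{\top} & (\A_2\xvec)^{\top}\end{bmatrix}^{\top}$, which is an instance of~\eqref{eq:func} with $\A^{\top}=[\A_1^{\top},\A_2^{\top}]$, applying Proposition~\ref{thm:main}, and then left-multiplying by the block-summation matrix --- is exactly the paper's proof, which introduces the intermediate output $\ovec$ and the matrix $\M$ and sets $\my=\M\boldsymbol{m}^{\ovec}$, $\Pxy=\PP^{\xvec\ovec}\M^{\top}$, $\Pyy=\M\PP^{\ovec\ovec}\M^{\top}$. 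Your primary route is a genuinely different decomposition: you peel off $\bar{\A}\xvec$ with $\bar{\A}=[\A_1^{\top},\Z]^{\top}$ and show that the cubature sums commute with adding a linear function of the input to the output, yielding the update rules $\my'=\my+\bar{\A}\mx$, $\Pxy'=\Pxy+\Pxx\bar{\A}^{\top}$, $\Pyy'=\Pyy+\bar{\A}\Pxy+\Pyx\bar{\A}^{\top}+\bar{\A}\Pxx\bar{\A}^{\top}$. This is valid, but it leans on the moment-matching identities $\sum_i w^{(i)}\Xcal^{(i)}=\mx$ and $\Lxx\Xicalbf\Wcalbf\Xicalbf^{\top}\Lxx^{\top}=\Pxx$ (which do follow from Assumptions~\ref{rem:symmetry}--\ref{rem:product}, and are already established in the proof of Proposition~\ref{thm:main}), whereas the paper's route needs only the exactness of the cubature sums under a fixed deterministic linear post-map, with no appeal to the assumptions at all. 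What your route buys in exchange is an explicit perturbation formula showing precisely how the extra linear term shifts each moment, and it makes transparent that the nonlinear content $\Gcalbf_z,\wvec_z,w_{cl}$ is untouched, so the savings of Corollaries~\ref{cor:compgains} and~\ref{cor:A} indeed carry over; both arguments are sound.
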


\section{Implications for the filtering problem}\label{sec:filtering}
To illustrate the implications for the filtering problem, we start by defining the regular LRKF filters in the context of the notation in Section~\ref{sec:prelim}. Here, we will use a given cubature rule in approximating the moment integrals, both for the flow function $\F:\mathbb{R}^X\rightarrow \mathbb{R}^X$ and the measurement function $\HH:\mathbb{R}^X\rightarrow \mathbb{R}^Y$. As such, the linear and nonlinear states may not be the same for the two functions. Consequently, we assume that the functions have an associated orthogonal permutation matrix, here denoted $\T_{\F}$ and $\T_{\HH}$ respectively, such that
\begin{equation}
    \begin{bmatrix}\bar{\zvec} \\ \bar{\lvec}\end{bmatrix}=\bar{\xvec} = \T_{\F}\xvec,
\end{equation}
yields a permuted state vector where the $\bar{\zvec}$ states are nonlinear in $\F$, and the $\bar{\lvec}$ are linear in $\F$. If the state is Gaussian distributed with $\Ncal(\xvec|\mx, \Pxx)$, then this results in a change of both the mean and covariance, with
$\Ncal(\bar{\xvec}|\mxbar, \Pxxbar) = \Ncal(\bar{\xvec}|\T_{\F}\mx, \T_{\F}\Pxx(\T_{\F})^{\top})$. For future reference, we let the transformation of the moments of the state $\xvec$ by a transformation $\T_{\F}$ be denoted $\texttt{\{}\mxbar, \Pxxbar\texttt{\}} = \texttt{permute}(\T_{\F}, \mx, \Pxx)$. Similarly, we will have a different set of points, $\Pcal$, for the flow and measurement equations, and these are simply defined as $\Pcal_{\F}$ and $\Pcal_{\HH}$ respectively. Furthermore, we denote the full Cholesky factorization of a matrix by $\Lxx = \texttt{chol}(\Pxx)$, and the Cholesky-Crout algorithm computing the first $Z$ columns of the square-root factorization as $\texttt{\{}\Lnn, \Lnl\texttt{\}} = \texttt{chol\_crout}(\Pxx)$. The generic LRKF is given in Algorithm~\ref{alg:LRKF}, to be compared with the PL-LRKF in Algorithm~\ref{alg:LSLRKF}.

\begin{algorithm}\small
\caption{The generic LRKF}\label{alg:LRKF}
    \begin{algorithmic}[1]
		\State Initialize $\Ncal(\hat\xvec_{0}|\mxhat_{0}, \Pxxhat_{0}),\Pcal_{\F},\Pcal_{\HH}$
		
		\For{$k=1$ {\bf to} $K$}
		
		\textit{\textbf{// Time update}}
		\State $\Lxxhat_{k-1} = \texttt{chol}(\Pxxhat_{k-1})$
        \State Compute ${\Xcal}_{k-1}$ by~\eqref{eq:subsetdef} using $\mxhat_{k-1},\Lxxhat_{k-1},\Pcal_{\F}$
        \State Evaluate $\mxhat_{k|k-1}$ and $\Pxxhat_{k|k-1}$ using~\eqref{eq:approxmomentint} for~\eqref{eq:dyn}
        
        \textit{\textbf{// Measurement update}}
		\State $\Lxxhat_{k|k-1} = \texttt{chol}(\Pxxhat_{k|k-1})$
		\State Compute $\Xcal_{k|k-1}$ by~\eqref{eq:subsetdef} using $\mxhat_{k|k-1},\Lxxhat_{k|k-1}, \Pcal_{\HH}$
        \State \parbox[t]{\dimexpr .94\linewidth-\algorithmicindent}{Evaluate $\{\myhat_{k|k-1}, \Pyyhat_{k|k-1}, \Pxyhat_{k|k-1}\}$ using~\eqref{eq:approxmomentint} for the function in~\eqref{eq:meas} with the point set in $\mathcal{P}_{\HH}$}
        \State Evaluate $\{\mxhat_{k}, \Pxxhat_{k}\}$ using~\eqref{eq:conditional}
        \EndFor
	\end{algorithmic}
\end{algorithm}

\begin{algorithm}\small
\caption{The generic PL-LRKF}\label{alg:LSLRKF}
    \begin{algorithmic}[1]
		\State Initialize $\Ncal(\hat\xvec_{0}|\mxhat_{0}, \Pxxhat_{0}),\Pcal_{\F},\T_{\F},\Pcal_{\HH},\T_{\HH}$
		
		\For{$k=1$ {\bf to} $K$}
		
		\textit{\textbf{// Time update}}

		\State $\texttt{\{}\mxbar_{k-1}, \Pxxbar_{k-1}\texttt{\}} = \texttt{permute}(\T_{\F}, \mxhat_{k-1}, \Pxxhat_{k-1})$

		\State $\texttt{\{}\Lnnbar_{k-1}, \Lnlbar_{k-1}\texttt{\}} = \texttt{chol\_crout}(\Pxxbar_{k-1})$

        \State Compute ${\Zcal}_{k-1}$ by~\eqref{eq:subsetdef} using $\mnbar_{k-1},\Lnnbar_{k-1},\mathcal{P}_{\F}$

        \State Evaluate $\mxbar_{k|k-1}$ and $\Pxxbar_{k|k-1}$by Proposition~\ref{thm:main} with~\eqref{eq:dyn}

		\State $\texttt{\{}\mxhat_{k|k-1}, \Pxxhat_{k|k-1}\texttt{\}} \hspace{-2pt}= \hspace{-2pt}\texttt{permute}(\T_{\F}^{\top}, \mxbar_{k|k-1}, \Pxxbar_{k|k-1})$
        
        \textit{\textbf{// Measurement update}}
		\State $\texttt{[}\mxbar_{k|k-1}, \Pxxbar_{k|k-1}\texttt{]} = \texttt{permute}(\T_{\HH}, \mxhat_{k|k-1}, \Pxxhat_{k-1})$

		\State $\texttt{\{}\Lnnbar_{k|k-1}, \Lnlbar_{k|k-1}\texttt{\}} = \texttt{chol\_crout}(\Pxxbar_{k|k-1})$

        \State Compute ${\Zcal}_{k|k-1}$ by~\eqref{eq:subsetdef} using $\mnbar_{k|k-1},\Lnnbar_{k|k-1},\mathcal{P}_{\HH}$

        \State \parbox[t]{\dimexpr .94\linewidth-\algorithmicindent}{Evaluate $\{\mybar_{k|k-1}, \Pyybar_{k|k-1}, \Pxybar_{k|k-1}\}$ using Proposition~\ref{thm:main} for the function in~\eqref{eq:meas} with the point set $\mathcal{P}_{\HH}$}

        \State Evaluate $\{\mxbar_{k}, \Pxxbar_{k}\}$ using~\eqref{eq:conditional}

		\State $\texttt{\{}\mxhat_{k}, \Pxxhat_{k}\texttt{\}} = \texttt{permute}(\T_{\HH}^\top, \mxbar_{k}, \Pxxbar_{k})$
        \EndFor
	\end{algorithmic}
\end{algorithm}

\section{Numerical results}\label{sec:numerical}
In this section, we study the effects of using the simplified moment approximations in two primary respects. We first show that evaluating the original integral approximations in~\eqref{eq:approxmomentint} using the integration point sets defined in Section~\ref{sec:results} yield the same results, down to numerical precision, as when evaluating the same moment integrals using Proposition~\ref{thm:main}. In this first example, we also illustrate the the average computational times and relate these to the sizes of the dimensions of he linear and nonlinear states respectively, as well as the cardinality of the point sets. In the second example, we consider the SC integration point set defined in Section~\ref{sec:CKF}, and compare the third order CKF and the corresponding PL-CKF, showing two algorithms produce identical estimates down to machine precision.

\subsection{Complexity analysis and verification of Proposition 1}
Clearly, the permutations $\T$ used in the PL-LRKF only imply a change in indexation, and little if any extra computational cost. Consequently, the main difference in the two algorithms follows from Corollary~\ref{cor:compgains} and~\ref{cor:A}. Executing the PL-LRKF vs the LRKF should imply a significant reduction in the computational complexity if $Z_F=\text{dim}(\bar{\zvec}_{k}) \leq X$ and $Z_H=\text{dim}(\bar{\zvec}_{k|k-1}) \leq X$. Here, depending on the filter used and its associate scaling $C(X)$, the number of function evaluations required to execute the filter will be reduced by a factor $(C(Z_F) + C(Z_H))(2C(X))^{-1}$.  This potentially yields an extreme reduction in complexity for filters like the GHKF where $C(X)$ is exponential. In addition, this will only require evaluation of the nonlinear part of the flow and measurement function, and not repeated evaluations of the full functions $\F$ and $\HH$, further reducing the numerical complexity. Finally, if the conditions in Corollary~\ref{cor:A} are met, significant computational gains will be made as there will be no need for evaluating the full Cholesky decomposition.

To illustrate the implications of exploiting the linear substructure in the moment matching, we form a nonlinear function $\G(\xvec)$ defined as in~\eqref{eq:func}, here with random dense matrix $\A\in\mathbb{R}^{L\times X}$ and let $\gvec(\zvec) = \zvec + \|\zvec\|_2^2\mathbf{1}_{Z}\in\mathbb{R}^Z$. We execute the moment approximations using:
\begin{enumerate}
    \item[(A)] The original cubature rules defined in Section \ref{sec:results}
    \item[(B)] The corresponding cubature rules in Proposition~\ref{thm:main}, here denoted PL-SC, PL-UT and PL-GHC.
\end{enumerate}
A set of $10^6$ moment approximations are done for various pairs of $(Z,L)$. In order to check the correctness of Proposition~\ref{thm:main}, we compare the moments by a metric $\Delta(\my) = \mathbb{E}[\|\boldsymbol{m}^{\boldsymbol{y},A} - \boldsymbol{m}^{\boldsymbol{y},B}\|_2]$, where $\boldsymbol{m}^{\boldsymbol{y},A}$ and $\boldsymbol{m}^{\boldsymbol{y},B}$ denotes a moment, here the output mean, as computed by (A) and (B), respectively. In addition, we let ${t}_A$ and ${t}_B$ denote the mean computational time with (A) and (B) in Maltab running on an Intel i5-6200U CPU at 2.30GHz. The result is shown in Tables~\ref{tab:tab_CKF},~\ref{tab:tab_UKF}, and~\ref{tab:tab_GHKF}. Note that many of the fields in Table~\ref{tab:tab_GHKF} could not be filled, as Matlab ran out of memory when storing all of the associated integration points. In all of the tested moment approximations, $\Delta(\my), \Delta(\Pxy), \Delta(\Pyy)$ were all zero down to numerical precision. This shows the significant computational gains that can be made in using Proposition~\ref{thm:main} to exploit knowledge of linear substructure, and also indicates that the GHKF, which is commonly dismissed for high-dimensional state estimation, can be used provided the number of non-linear states $Z$ is relatively small.

\replaceinpaper{
\begin{table}[h]
    \centering
    \caption{Using (A) the original SC and (B) the PL-SC in $10^6$ moment approximations. Mean computational time [s], relative speed, and difference between the computed moments in the two-norm, for various dimensions $(Z,L)$.}
    \begin{tabular}{|c|c|c|c|c|c|c|}\hline
        $(Z/L)$ & $t_A$  & $t_B$ & $t_A/t_B$ & $\Delta(\my)$ & $\Delta(\Pxy)$&$\Delta(\Pyy)$\\\hline\hline
        (3/10) & $1.54\cdot{10}^{-4}$ & $7.64\cdot{10}^{-5}$ & \textbf{2.01} & $3.33\cdot{10}^{-15}$ & $9.67\cdot{10}^{-16}$ & $1.77\cdot{10}^{-14}$ \\
        (3/100) & $4.21\cdot{10}^{-3}$ & $4.51\cdot{10}^{-4}$ & \textbf{9.34} & $2.85\cdot{10}^{-14}$ & $3.24\cdot{10}^{-15}$ & $3.77\cdot{10}^{-13}$ \\
        (3/1000) & $1.50$ & $1.30\cdot10^{-1}$ & \textbf{11.66} & $1.87\cdot{10}^{-12}$ & $1.21\cdot{10}^{-14}$ & $8.41\cdot{10}^{-11}$ \\
        (50/100) & $9.43\cdot{10}^{-3}$ & $3.23\cdot{10}^{-3}$ & \textbf{2.91} & $1.37\cdot{10}^{-12}$ & $2.73\cdot{10}^{-14}$ & $9.47\cdot{10}^{-11}$ \\
        (50/1000) &  1.69 & $1.15\cdot10^{-1}$ & \textbf{11.51} & $1.53\cdot{10}^{-11}$ & $9.11\cdot{10}^{-14}$ & $2.10\cdot{10}^{-10}$
        \\\hline
    \end{tabular}
    \label{tab:tab_CKF}
    \vspace{10pt}
    
    \caption{Using (A) the original UT and (B) the PL-UT in $10^6$ moment approximations. Mean computational time [s], relative speed, and difference between the computed moments in the two-norm, for various dimensions $(Z,L)$.}
    \begin{tabular}{|c|c|c|c|c|c|c|}\hline
        $(Z/L)$ & $t_A$  & $t_B$ & $t_A/t_B$ & $\Delta(\my)$ & $\Delta(\Pxy)$&$\Delta(\Pyy)$\\\hline\hline
        (3/10) & $1.51\cdot{10}^{-4}$ & $7.88\cdot{10}^{-5}$ & \textbf{2.00} & $6.24\cdot{10}^{-15}$ & $7.88\cdot{10}^{-16}$ & $3.42\cdot{10}^{-14}$ \\
        (3/100) & $4.15\cdot{10}^{-3}$ & $4.23\cdot{10}^{-4}$ & \textbf{9.80} & $1.65\cdot{10}^{-14}$ & $3.19\cdot{10}^{-15}$ & $2.55\cdot{10}^{-12}$ \\
        (3/1000) & $1.57$ & $1.20\cdot{10}^{-1}$ & \textbf{13.06} & $1.87\cdot{10}^{-13}$ & $1.19\cdot{10}^{-14}$ & $8.65\cdot{10}^{-11}$ \\
        (50/100) & $9.86\cdot{10}^{-3}$ & $2.29\cdot{10}^{-3}$ & \textbf{4.30} & $1.34\cdot{10}^{-12}$ & $2.85\cdot{10}^{-14}$ & $8.65\cdot{10}^{-11}$ \\
        (50/1000) & 1.77 &  $1.39\cdot{10}^{-1}$ & \textbf{12.78} &
        $1.33\cdot{10}^{-11}$ & $0.49\cdot{10}^{-14}$ & $2.99\cdot{10}^{-10}$ \\\hline
    \end{tabular}
    \label{tab:tab_UKF}
    \vspace{10pt}
    
    \caption{Using (A) the original GHC and (B) the PL-GHC in $10^6$ moment approximations. Mean computational time [s], relative speed, and difference between the computed moments in the two-norm, for various dimensions $(Z,L)$.}
    \begin{tabular}{|c|c|c|c|c|c|c|}\hline
        $(Z/L)$ & $t_A$  & $t_B$ & $t_A/t_B$ & $\Delta(\my)$ & $\Delta(\Pxy)$&$\Delta(\Pyy)$\\\hline\hline
        (3/3) & $5.11\cdot{10}^{-3}$ & $1.24\cdot{10}^{-4}$  & $\mathbf{4.12\cdot 10^1}$ &  $1.77\cdot{10}^{-14}$ &  $1.25\cdot{10}^{-15}$ &  $2.06\cdot{10}^{-14}$ \\
        (3/4) & $1.82\cdot 10^{-2}$ & $1.12 \cdot 10^{-4}$  & $\mathbf{1.62\cdot 10^2}$ &
        $2.27\cdot{10}^{-14}$ &  $1.30\cdot{10}^{-15}$ &  $2.75\cdot{10}^{-14}$ \\
        (3/5) & $2.12\cdot 10^{-1}$ & $1.23 \cdot 10^{-4}$ &  $\mathbf{1.73\cdot 10^3}$ &
        $4.41\cdot{10}^{-14}$ &  $2.05\cdot{10}^{-15}$ &  $3.09\cdot{10}^{-14}$ \\
        (3/10) & - & $\mathbf{1.70 \cdot 10^{-4}}$ & - & - & - & - \\
        (3/100) & - & $\mathbf{2.50 \cdot 10^{-3}}$ & - & - & - & - \\\hline
    \end{tabular}
    \label{tab:tab_GHKF}
\end{table}
}{
\begin{table}[h]
    \centering
    \caption{Mean computational time [s] and relative speed of (A) CKF and (B) the PL-CKF in $10^6$ moment approximations.}
    \begin{tabular}{|c|c|c|c|}\hline
        $(Z/L)$ & $t_A$  & $t_B$ & $t_A/t_B$\\\hline\hline
        (3/10) & $1.54\cdot{10}^{-4}$ & $7.64\cdot{10}^{-5}$ & \textbf{2.01}  \\
        (3/100) & $4.21\cdot{10}^{-3}$ & $4.51\cdot{10}^{-4}$ & \textbf{9.34} \\
        (3/1000) & $1.50$ & $1.30\cdot10^{-1}$ & \textbf{11.66} \\
        (50/100) & $9.43\cdot{10}^{-3}$ & $3.23\cdot{10}^{-3}$ & \textbf{2.91}  \\
        (50/1000) &  1.69 & $1.15\cdot10^{-1}$ & \textbf{11.51}
        \\\hline
    \end{tabular}
    \label{tab:tab_CKF}
    \centering
    \caption{Mean computational time [s] and relative speed of (A) UKF and (B) the PL-UKF in $10^6$ moment approximations.}
    \begin{tabular}{|c|c|c|c|}\hline
        $(Z/L)$ & $t_A$  & $t_B$ & $t_A/t_B$ \\\hline\hline
        (3/10) & $1.51\cdot{10}^{-4}$ & $7.88\cdot{10}^{-5}$ & \textbf{2.00} \\
        (3/100) & $4.15\cdot{10}^{-3}$ & $4.23\cdot{10}^{-4}$ & \textbf{9.80} \\
        (3/1000) & $1.57$ & $1.20\cdot{10}^{-1}$ & \textbf{13.06}  \\
        (50/100) & $9.86\cdot{10}^{-3}$ & $2.29\cdot{10}^{-3}$ & \textbf{4.30}  \\
        (50/1000) & 1.77 &  $1.39\cdot{10}^{-1}$ & \textbf{12.78} \\\hline
    \end{tabular}
    \label{tab:tab_UKF}
    \centering
    \caption{Mean computational time [s] and relative speed of (A) GHCR and (B) the PL-GHCR in $10^6$ moment approximations.}
    \begin{tabular}{|c|c|c|c|}\hline
        $(Z/L)$ & $t_A$  & $t_B$ & $t_A/t_B$\\\hline\hline
        (3/3) & $5.11\cdot{10}^{-3}$ & $1.24\cdot{10}^{-4}$  & $\mathbf{4.12\cdot 10^1}$\\
        (3/4) & $1.82\cdot 10^{-2}$ & $1.12 \cdot 10^{-4}$  & $\mathbf{1.62\cdot 10^2}$\\
        (3/5) & $2.12\cdot 10^{-1}$ & $1.23 \cdot 10^{-4}$ &  $\mathbf{1.73\cdot 10^3}$\\
        (3/10) & - & $\mathbf{1.70 \cdot 10^{-4}}$ & - \\
        (3/100) & - & $\mathbf{2.50 \cdot 10^{-3}}$ & -\\\hline
    \end{tabular}
    \label{tab:tab_GHKF}
\end{table}
}

\subsection{Application example}
Next, we give an example where Algorithm~\ref{alg:LSLRKF} can be of particular use. Imagine a scenario where $N$ agents with positions $\pvec_k^i\in \mathbb{R}_k^3, \;i = 1,...,N$, velocities $\vvec^i_k\in \mathbb{R}^3$ and accelerations $\avec_k^i\in \mathbb{R}^3$ in the vicinity of a base station located at $\pvec^B = \Z\in\mathbb{R}^3$. The  agents could here be ground or aerial vehicles, and for the sake of generality, suppose their motions are described by a discrete time Singer model, defined as
\begin{equation}
\xvec_{k+1}^i = \A_k^i\xvec_k^i + \qvec_k^i, \qquad \qvec_k^i\sim\Ncal(\Z,\Q_k^i),
\end{equation}
where $\A_k^i$ and $\Q_k^i$ are linear maps, stated explicitly in~\cite{singer1970estimating}.

Each of the agents estimate its own states independently based on local information. This could be done using a wide variety of algorithms, and we assume this local estimate can be represented in its first two moments, as $\Ncal(\xvec_k^i|,{\mxhat_k}^i, {\Pxxhat_k}^i)$, and that this information is occasionally transmitted to a base station. The base station subsequently fuses this information together with two-dimensional bearing angle measurements taken of each agent. In the base station, the state vector therefore takes the form $\xvec_k = [(\xvec_k^1)^{\top},...,(\xvec_k^N)^{\top}]^{\top}$, with the estimation model
\begin{equation}\label{eq:exampledyn}
\xvec_{k+1} = \A_k\xvec_k + \qvec_k, \qquad \qvec_k\sim\Ncal(\Z,\Q_k),
\end{equation}
where $\A_k = \I_N\otimes \A_k^i$ and $\Q_k = \I_N\otimes\Q_k^i$. The angular measurement model is defined as $\yvec_{\alpha} = \boldsymbol{\alpha}(\pvec_k^i)+ \rvec_{Z,k}^i$, with
\begin{equation}
\boldsymbol{\alpha}(\pvec_k^i) = \begin{bmatrix}
\arctan\Big(\frac{(\pvec_k^i)^{\top}\evec_2^3 }{(\pvec_k^i)^{\top}\evec_1^3}\Big)\\
\arctan\Big(\frac{\sqrt{((\pvec_k^i)^{\top}\evec_1^3) + ((\pvec_k^i)^{\top}\evec_2^3)^2}}{ (\pvec_k^i)^{\top}\evec_3^3}\Big)
\end{bmatrix},
\end{equation}
and the noise $\rvec_{Z,k}^i\sim \mathcal{N}(\Z, \sigma_{\alpha}^2\I_2)$. Thus, by letting
\begin{equation*}
\pvec_k \hspace{-3pt}=\hspace{-3pt} \begin{bmatrix}
\pvec_k^1\\\vdots\\\pvec_k^N
\end{bmatrix}\hspace{-2pt}, \;
\hvec(\pvec_k) \hspace{-3pt}=\hspace{-3pt}
\begin{bmatrix}
\boldsymbol{\alpha}(\pvec_k^1)\\
\vdots\\
\boldsymbol{\alpha}(\pvec_k^N)\\
\end{bmatrix}\hspace{-2pt}, \;\R_{x,k} \hspace{-3pt}=\hspace{-3pt} \begin{bmatrix}
{\Pxxhat_k}^1 & \hspace*{-5pt}\cdots\hspace*{-5pt} & \Z\\
\vdots & \hspace*{-5pt}\ddots\hspace*{-5pt} & \vdots \\
\Z & \hspace*{-5pt}\cdots\hspace*{-5pt} & {\Pxxhat_k}^N
\end{bmatrix}
\end{equation*}
and $\R_{\alpha,k} = \sigma_\alpha^2\I_{2N}$, the combined measurement model of all agents' relative angles can be written
\begin{align}\label{eq:examplemeasurement}
&\yvec_k = \HH(\xvec_k) + \rvec_k
=
\begin{bmatrix}
\hvec(\pvec_k)\\
\xvec_k
\end{bmatrix} + 
\begin{bmatrix}
\rvec_{\alpha,k}\\
\rvec_{x,k}
\end{bmatrix},
\\&\Ncal(\rvec_k|\Z,\R_k) = \Ncal\Bigg(
\begin{bmatrix}
\rvec_{\alpha,k}\\
\rvec_{x,k}
\end{bmatrix}
\Bigg|
\begin{bmatrix}\Z\\\Z\end{bmatrix}
,
\begin{bmatrix}
\R_{\alpha,k}&\Z\\
\Z&\R_{x,k}
\end{bmatrix}
\Bigg),\nonumber
\end{align}
where the transformation
\begin{equation}\label{eq:transformation}
\T_{\HH} = \begin{bmatrix}
\I_N\otimes\begin{bmatrix}
\I_3& \Z_{3\times 6}\\
\end{bmatrix}\\
\I_N\otimes\begin{bmatrix}
\Z_{6\times 3}&\I_6
\end{bmatrix}
\end{bmatrix},
\end{equation}
transforms the state $\xvec_k$ into the form in~\eqref{eq:distX}, where the nonlinear states (here the positions) are stacked in the first $3N$ elements of the transformed state vector $\bar{\xvec}_k = \T_{\HH}\xvec_k$. With the dynamics in~\eqref{eq:exampledyn} and the partially linear measurement equation in~\eqref{eq:examplemeasurement}, we consider a problem where $N = 10$ (that is, $Z = 30$ and $L = 60$), and execute
\begin{enumerate}
    \item[(A)] a CKF, that is, Algorithm~\ref{alg:LRKF} with point sets $\Pcal_F$ and $\Pcal_H$ defined according to the SC in Section~\ref{sec:CKF}.
    \item[(B)] a PL-CKF, that is, Algorithm~\ref{alg:LSLRKF} with point sets $\Pcal_F$ and $\Pcal_H$ defined according to the SC in Section~\ref{sec:CKF}, using $\T_{\F} = \I$ and $\T_{\HH}$ defined in~\eqref{eq:transformation}, and evaluating the moment integrals using Proposition~\ref{thm:main}.
\end{enumerate}

The resulting mean estimate error of the PL-CKF is shown in terms of the positions, velocities and accelerations of all agents together with the $95\%$-confidence interval in Figure~\ref{fig:esterr}. Furthermore, as we compute the same joint distribution in the PL-CKF and the CKF down to numerical precision, we get the same estimates with the two algorithms. This is shown by comparing the posterior estimate means, $\boldsymbol{m}^{\boldsymbol{x},A}_{k|k}$ for the CKF and $\boldsymbol{m}^{\boldsymbol{x},B}_{k|k}$ for the PL-CKF, here in $l_2$-norm in time, as depicted in Figure~\ref{fig:momerr}. This demonstrates that Algorithm~\ref{alg:LSLRKF} indeed works as intended, and offers further numerical verification of the main result in Proposition~\ref{thm:main}.

\begin{figure}[h!]
    \centering
    \includegraphics[width=0.8\textwidth]{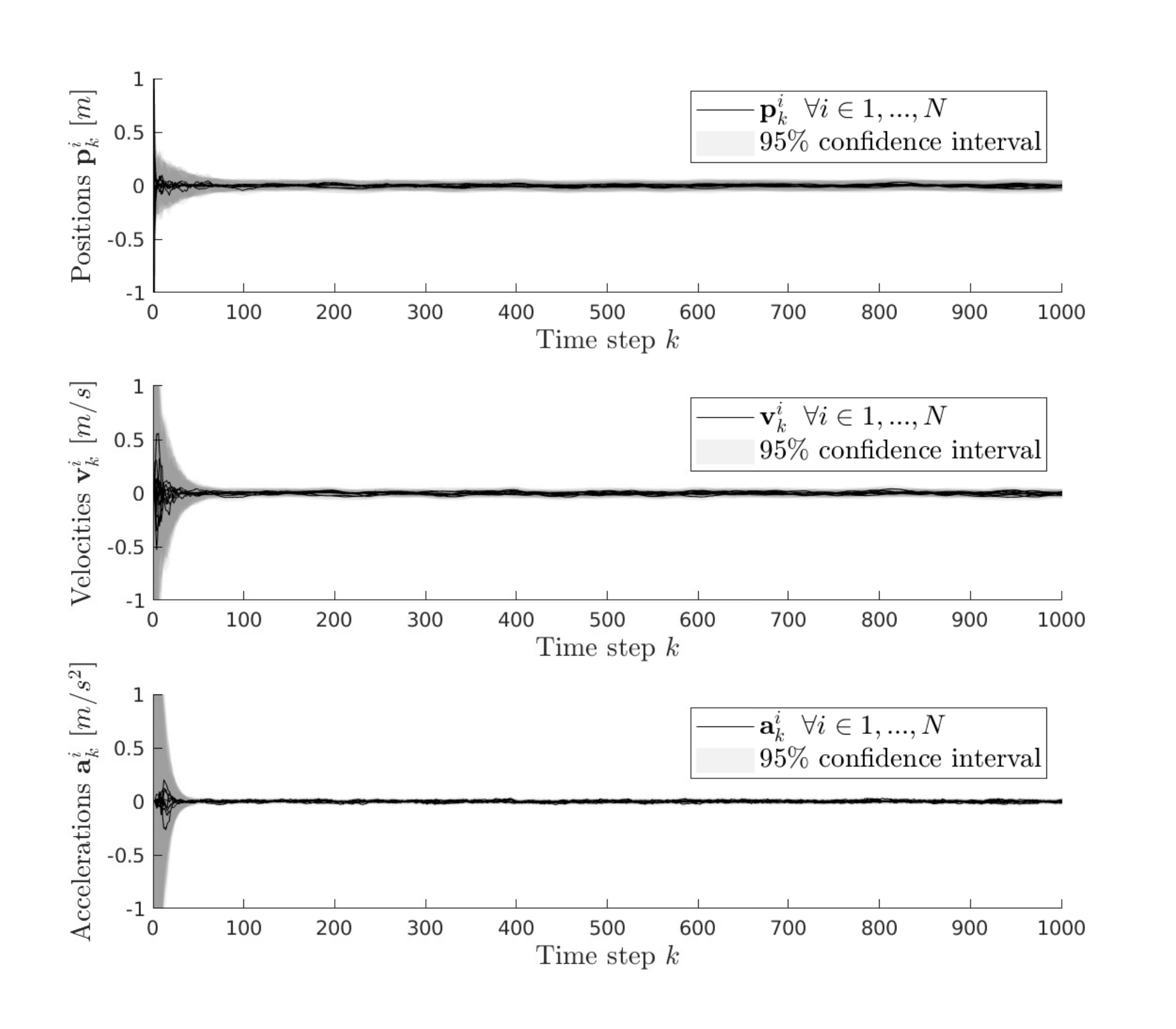}\vspace*{-20pt}
    \caption{Mean estimate error in positions (top), velocities (center) and accelerations (bottom), when running the PL-CKF for $N=10$ agents.}
    \label{fig:esterr}
    \vspace*{5pt}
    \includegraphics[width=0.75\textwidth]{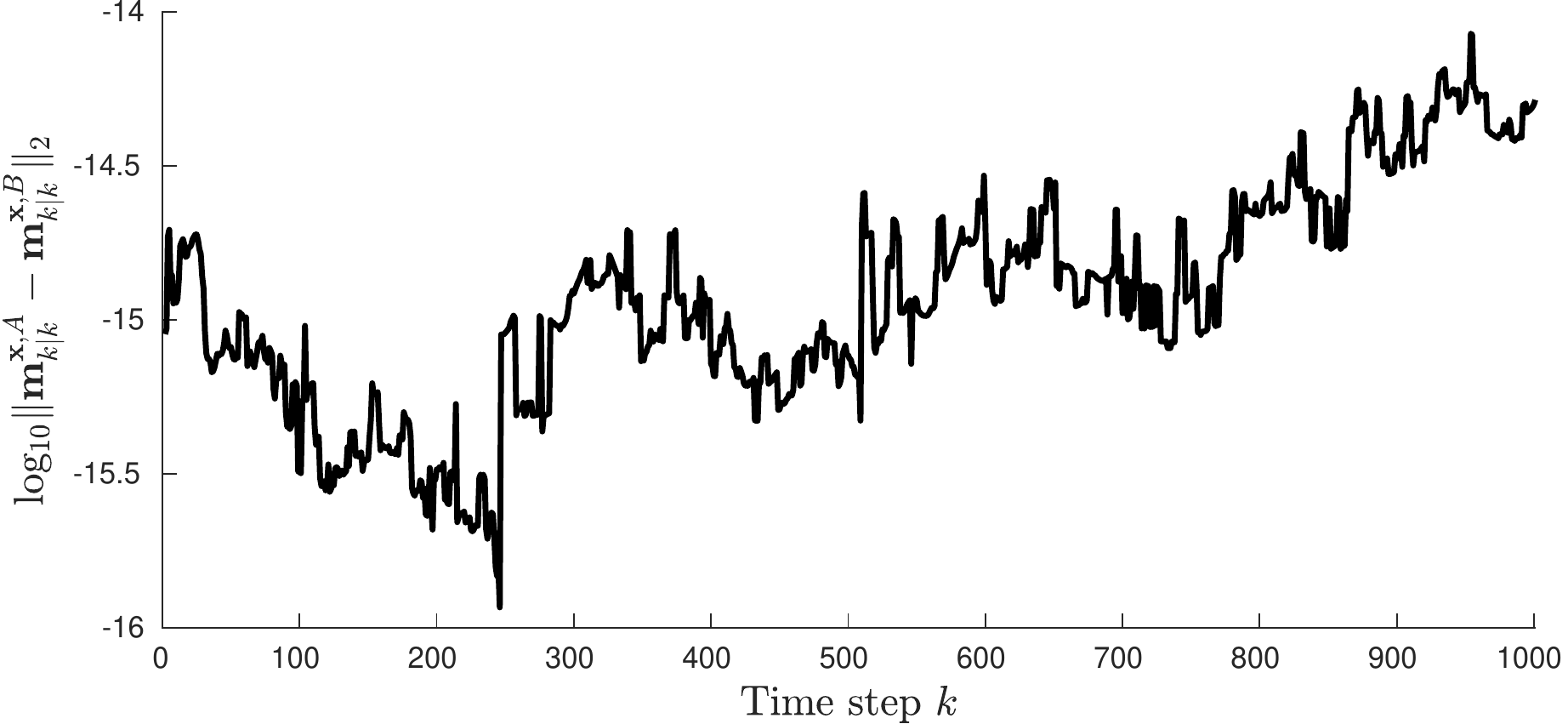}\vspace*{-10pt}
    \caption{Difference in the first moment of the state estimate distributions in the 10-logarithm and $l_2$-norm when running the CKF and the PL-CKF on the same synthetic data with $N = 10$ agents.}
    \label{fig:momerr}
\end{figure}

\section{Conclusions}\label{sec:conclusion}
In this paper, we examine the implications of known linear substructure in the LRKFs, which can be leveraged to decrease the computational complexity of the filters significantly. We have stated exactly how the evaluation of the moment matching approximations in~\eqref{eq:approxmomentint} simplifies when there exists a known linear substructure in Proposition~\ref{thm:main}, provided the cubature rule satisfies Assumptions~\ref{rem:symmetry}-\ref{rem:product}. In addition, we have shown that we need not use the entire set of points $\Pcal_z$ in evaluating these integrals provided additional symmetry assumption, and that the number of function evaluations then scales as $C(Z)$ instead of $C(X)$. We have given a condition under which only a partial Cholesky decomposition needs to be computed in the moment matching. Furthermore, we have demonstrated that the above assumptions are satisfied for the cubature rules used in the CKF, UKF, and GHKF. In doing so, we have also given modified versions of the original algorithms which exploit this linear substructure in Algorithm~\ref{alg:LSLRKF}, and demonstrated its efficacy on a distributed filtering example. The results can easily be generalized to smoothing problems, but this falls outside the scope of the paper and is left as a topic for future research. On a final note, known linear substructure should be exploited in filtering applications whenever possible, and any implementation of an LRKF should be done using the moment approximations in Proposition~\ref{thm:main} in order to conserve computational resources.

\clearpage\section{Appendix}\label{sec:app}

\begin{proof}[Proof of Proposition~\ref{thm:main}]

Granted Assumption~\ref{rem:symmetry}, the cardinality of the sets of linear and nonlinear points are necessarily even, and therefore, we can \removeinpaper{without loss of generality} choose an ordering such that
\begin{align}
\Xi_l^{(i)} + \Xi_l^{(|\Pcal_l|/2+i)}=\Z\quad\forall i &= \{1,...,|\Pcal_l|/2\},&
\Xi_z^{(i)} + \Xi_z^{(|\Pcal_z|/2+i)}=\Z\quad\forall i &= \{1,...,|\Pcal_z|/2\}.
\end{align}

Now, consider the part of the point $\Xcal^{(i)}$ that is the input to the nonlinear function $\fullnonlinfunc$, as defined in~\eqref{eq:func}. We have that
\begin{equation}
\Zcal^{(i)} \hspace{-2pt}=\hspace{-2pt} \mathbf{N}\Xcal^{(i)}\removeinpaper{= \mathbf{N}\begin{bmatrix}
\mn\\
\ml
\end{bmatrix} + \mathbf{N}
\begin{bmatrix}
\Lnn&\Z\\
\Lln&\Lll
\end{bmatrix}
\xivec^{(i)}}.
\end{equation}
Then, since $\mathbf{N}\Zcal_c =\mathbf{N}\Zcal_l =  \mvec^{\zvec}$, we obtain
\shortenedEquation{
\begin{subequations}\label{eq:Z}
\begin{align}
\sum\limits_{i = 1}^{|\Pcal|}
w^{(i)}\nonlinfunc(\Zcal^{(i)})
&=
\sum\limits_{i = 1}^{|\Pcal_c|} w^{(i)}_c\nonlinfunc(\Zcal_c^{(i)}) +
\sum\limits_{i = 1}^{|\Pcal_z|} w^{(i)}_z\nonlinfunc(\Zcal_z^{(i)}) + \sum\limits_{i = 1}^{|\Pcal_l|}
w^{(i)}_l\nonlinfunc(\Zcal_l^{(i)})\\
&=\Big(\sum\limits_{i = 1}^{|\Pcal_c|}
w^{(i)}_c+
\sum\limits_{i = 1}^{|\Pcal_l|}
w^{(i)}_l\Big)\nonlinfunc(\mn)+
\sum\limits_{i = 1}^{|\Pcal_z|}
w^{(i)}_z\nonlinfunc(\Zcal_z^{(i)})\\
&=
w_{cl}\nonlinfunc(\mn)+
\sum\limits_{i = 1}^{|\Pcal_z|}
w^{(i)}_z\nonlinfunc(\Zcal_z^{(i)}).
\end{align}
\end{subequations}
}{
\begin{align}\label{eq:Z}
\sum\limits_{i = 1}^{|\Pcal|}
w^{(i)}\nonlinfunc(\Zcal^{(i)})&=
w_{cl}\nonlinfunc(\mn)+
\sum\limits_{i = 1}^{|\Pcal_z|}
w^{(i)}_z\nonlinfunc(\Zcal_z^{(i)}).
\end{align}
}
where, by Assumption~\ref{rem:productFirst},
\begin{equation}
w_{cl} =\hspace{-10pt}
\sum_{(w,\xivec)\in(\Pcal_c\cup \Pcal_l)}\hspace{-5pt}
w = 1-\hspace{-5pt}\sum_{(w,\xivec)\in(\Pcal_z)}\hspace{-5pt}
w = 1-\wvec_z \mathbf{1}_{|\Pcal_z|}.
\end{equation}
In addition, using Assumptions~\ref{rem:symmetry} and~\ref{rem:productFirst}, we have that
\shortenedEquation{
\begin{subequations}\label{eq:X}
\begin{align}
\sum\limits_{i = 1}^{|\Pcal|}w^{(i)}\mathcal{X}^{(i)} &= 
\sum\limits_{i = 1}^{|\Pcal_c|}w_c^{(i)}\mathcal{X}_c^{(i)}+
\sum\limits_{i = 1}^{|\Pcal_z|}w_z^{(i)}\mathcal{X}_z^{(i)}+
\sum\limits_{i = 1}^{|\Pcal_l|}w_l^{(i)}\mathcal{X}_l^{(i)}\\
&=
\sum\limits_{i = 1}^{|\Pcal_c|}w_c^{(i)}\mx+
\sum\limits_{i = 1}^{|\Pcal_z|}w_z^{(i)}(\mx + \Lxx\Xi_z^{(i)})+
\sum\limits_{i = 1}^{|\Pcal_l|}w_l^{(i)}(\mx + \Lxx\Xi_l^{(i)})\\
&=
\sum\limits_{i = 1}^{|\Pcal_c|}w_c^{(i)}\mx+
\sum\limits_{i = 1}^{|\Pcal_z|/2}w_z^{(i)}(\mx + \Lxx\Xi_z^{(i)} + \mx - \Lxx\Xi_z^{(i)})\\&\hspace{67pt}
+\sum\limits_{i = 1}^{|\Pcal_l|/2}w_l^{(i)}(\mx + \Lxx\Xi_l^{(i)} + \mx - \Lxx\Xi_l^{(i)})\\
&=
\Big(\sum\limits_{i = 1}^{|\Pcal_c|}w_c^{(i)}+
\sum\limits_{i = 1}^{|\Pcal_z|/2}2w_z^{(i)}+
\sum\limits_{i = 1}^{|\Pcal_l|/2}2w_l^{(i)}\Big)\mx\\
&=
\Big(\sum\limits_{i = 1}^{|\Pcal_c|}w_c^{(i)}+
\sum\limits_{i = 1}^{|\Pcal_z|}w_z^{(i)}+
\sum\limits_{i = 1}^{|\Pcal_l|}w_l^{(i)}\Big)\mx\\
&=
\Big(\sum\limits_{i = 1}^{|\Pcal|}w^{(i)}\Big)\mx\\
&=\mx.
\end{align}
\end{subequations}
}{
\begin{equation}\label{eq:X}
    \sum\limits_{i = 1}^{|\Pcal|}w^{(i)}\mathcal{X}^{(i)} =\mx.
\end{equation}
}
Using~\eqref{eq:Z} and~\eqref{eq:X}, we can express the output mean as
\shortenedEquation{
\begin{subequations}\label{eq:meanY}
\begin{align}
\my &=\int_{\mathbb{R}^X}\fullnonlinfunc(\xvec)\mathcal{N}(\xvec|\mx,\Pxx)d\xvec\\
&\approx\sum\limits_{i = 1}^{|\Pcal|}w^{(i)}\Ycal^{(i)}\\
&=\sum\limits_{i = 1}^{|\Pcal|}w^{(i)}\begin{bmatrix}\nonlinfunc(\mathcal{Z}^{(i)})\\
\A\mathcal{X}^{(i)}\\
\end{bmatrix}\\
&=\begin{bmatrix}
\sum\limits_{i = 1}^{|\Pcal|}w^{(i)}\nonlinfunc(\mathcal{Z}^{(i)})\\
\A \sum\limits_{i = 1}^{|\Pcal|}w^{(i)}\mathcal{X}^{(i)}\\
\end{bmatrix}\\
&=\begin{bmatrix} w_{cl}\nonlinfunc(\mn)+
\sum\limits_{i = 1}^{|\Pcal_z|}
w^{(i)}_z\nonlinfunc(\Zcal_z^{(i)})\\
\A\mx
\end{bmatrix}\in\mathbb{R}^{Y}.
\end{align}
\end{subequations}
}{
\begin{align}\label{eq:meanY}
\my\hspace{-2pt}\approx\hspace{-2pt}\sum\limits_{i = 1}^{|\Pcal|}w^{(i)}\Ycal^{(i)}\hspace{-2pt}=\hspace{-2pt}\begin{bmatrix} w_{cl}\nonlinfunc(\mn)\hspace{-1pt}+\hspace{-1pt}
\sum\limits_{i = 1}^{|\Pcal_z|}
w^{(i)}_z\nonlinfunc(\Zcal_z^{(i)})\\
\A\mx
\end{bmatrix}\hspace{-2pt}.
\end{align}}

Furthermore, by the definition of the point $\mathcal{X}^{(j)}$, its difference with the mean $\mx$ can be written
\begin{align}
\mathcal{X}^{(j)} \hspace{-1pt}-\hspace{-1pt} \mx &= \mx \hspace{-1pt}+\hspace{-1pt} \Lxx\xivec^{(j)} \hspace{-1pt}-\hspace{-1pt} \mx \hspace{-1pt}=\hspace{-1pt} \Lxx\xivec^{(j)}\hspace{-1pt}\in\hspace{-1pt}\mathbb{R}^X,
\end{align}
or, equivalently stated in matrix form,
\begin{equation}\label{eq:diffX}
\Xcalbf - \mx\mathbf{1}_{|\Pcal|}^{\top}=\removeinpaper{
\begin{bmatrix}
\mathcal{X}^{(1)} - \mx  & \dots & \mathcal{X}^{(|\Pcal|)} - \mx 
\end{bmatrix} = }
\Lxx\Xicalbf\in\mathbb{R}^{X\times |\Pcal|}.
\end{equation}

Similarly, using the expression of the output mean in~\eqref{eq:meanY}, and invoking the expression for the difference between input point and the input mean in~\eqref{eq:diffX}, the difference between an output point
$\mathcal{Y}^{(i)}$ and the output mean is
\shortenedEquation{
\begin{subequations}
\begin{align}
\mathcal{Y}^{(j)} - \my &\approx\fullnonlinfunc(\mathcal{X}^{(j)}) - \my
\\
&=\begin{bmatrix}
\nonlinfunc({\mathcal{Z}}^{(j)})\\
\A \mathcal{X}^{(j)}
\end{bmatrix}
-
\begin{bmatrix}
w_{cl}\nonlinfunc(\mn)+
\sum\limits_{i = 1}^{|\Pcal_z|}
w^{(i)}_z\nonlinfunc(\Zcal_z^{(i)})\\
\A\mx\
\end{bmatrix}\\
&=\begin{bmatrix}
\nonlinfunc({\mathcal{Z}}^{(j)})-w_{cl}\nonlinfunc(\mn)-
\sum\limits_{i = 1}^{|\Pcal_z|}
w^{(i)}_z\nonlinfunc(\Zcal_z^{(i)})\\
\A (\mathcal{X}^{(j)}-\mx)\
\end{bmatrix}\\
&=\begin{bmatrix}
\nonlinfunc({\mathcal{Z}}^{(j)})-w_{cl}\nonlinfunc(\mn)-
\sum\limits_{i = 1}^{|\Pcal_z|}
w^{(i)}_z\nonlinfunc(\Zcal_z^{(i)})\\
\A\Lxx\xivec^{(j)}
\end{bmatrix}
\in\mathbb{R}^{Y},
\end{align}
\end{subequations}
}{
\begin{align}
\mathcal{Y}^{(j)} \hspace{-1pt}-\hspace{-1pt}\my &\hspace{-2pt}\approx\hspace{-2pt}\begin{bmatrix}
\nonlinfunc({\mathcal{Z}}^{(j)})\hspace{-2pt}-\hspace{-2pt}w_{cl}\nonlinfunc(\mn)\hspace{-2pt}-\hspace{-2pt}
\sum\limits_{i = 1}^{|\Pcal_z|}
w^{(i)}_z\nonlinfunc(\Zcal_z^{(i)})\\
\A\Lxx\xivec^{(j)}
\end{bmatrix}\hspace{-4pt},\hspace{-4pt}
\end{align}
}
which may be written in matrix form as
\begin{equation}\label{eq:diffY}
\Ycalbf - \my\mathbf{1}_{|\Pcal|}^{\top}\removeinpaper{=
\begin{bmatrix}
\mathcal{Y}^{(1)} - \my  & \dots & \mathcal{Y}^{(|\Pcal|)} - \my 
\end{bmatrix} = 
\begin{bmatrix}
\Gcalbf+\Ccalbf\\
\A \Lxx \Xicalbf
\end{bmatrix}}\hspace{-2pt}=\hspace{-2pt}
\begin{bmatrix}
\Ccalbf_c&\Gcalbf_z+\Ccalbf_z&\Ccalbf_l\\
\A \Lxx \Xicalbf_c & \A \Lxx \Xicalbf_z & \A \Lxx \Xicalbf_l
\end{bmatrix}\hspace{-2pt}
\end{equation}
\begin{subequations}\label{eq:diffY2}
where
\vspace*{-7pt}\begin{align}
\uvec &= -w_{cl}\nonlinfunc(\mn) - \sum\limits_{i = 1}^{|\Pcal_z|}
w^{(i)}_z\nonlinfunc(\Zcal_z^{(i)})\\
\Ccalbf_c &= (\nonlinfunc(\mn)+\uvec)\mathbf{1}_{|\Pcal_c|}^{\top}\\
\Ccalbf_l &= (\nonlinfunc(\mn)+\uvec)\mathbf{1}_{|\Pcal_l|}^{\top}\\
\Ccalbf_z &= \uvec\mathbf{1}_{|\Pcal_z|}^{\top}\\
\Gcalbf_z &= \begin{bmatrix}
\nonlinfunc(\Zcal^{(1)}) & \cdots & \nonlinfunc(\Zcal^{(|\Pcal_z|)})
\end{bmatrix}
\end{align}
\end{subequations}

Using~\eqref{eq:diffX} and~\eqref{eq:diffY}, we get the following compact expressions for the second moments of the joint distribution,
\shortenedEquation{
\begin{subequations}\label{eq:thmmainPxy}
\begin{align}
(\Pxy)^{\top}&=\int_{\mathbb{R}^X}(\fullnonlinfunc(\xvec) -\my)(\xvec -\mx)^{\top}\mathcal{N}(\xvec|\mx,\Pxx)d\xvec,\\
&\approx
\sum\limits_{i = 1}^{|\mathcal{P}|}w^{(i)}(\mathcal{Y}^{(i)} -\my)(\mathcal{X}^{(i)} -\mx)^{\top}\\
&=
\begin{bmatrix}
\mathcal{Y}^{(1)} - \my  & \dots & \mathcal{Y}^{(|\Pcal|)} - \my 
\end{bmatrix}\Wcalbf
\begin{bmatrix}
\mathcal{X}^{(1)} - \mx  & \dots & \mathcal{X}^{(|\Pcal|)} - \mx
\end{bmatrix}^{\top}\\
&=\begin{bmatrix}
(\Gcalbf+\Ccalbf)\\
\A \Lxx \Xicalbf
\end{bmatrix}
\Wcalbf\Xicalbf^{\top}{\Lxx}^{\top}=
\begin{bmatrix}
(\Gcalbf+\Ccalbf)\Wcalbf\Xicalbf^{\top}{\Lxx}^{\top}\\
\A \Lxx\Xicalbf\Wcalbf\Xicalbf^{\top}{\Lxx}^{\top}
\end{bmatrix}=\begin{bmatrix}
(\Gcalbf+\Ccalbf)\Wcalbf\Xicalbf^{\top}{\Lxx}^{\top}\\
\A \Pxx
\end{bmatrix}\\
\end{align}
\end{subequations}
}{
\begin{subequations}\label{eq:thmmainPxy}
\begin{align}
(\Pxy)^{\top}&\approx\removeinpaper{
\sum\limits_{i = 1}^{|\mathcal{P}|}w^{(i)}(\mathcal{Y}^{(i)} -\my)(\mathcal{X}^{(i)} -\mx)^{\top}=}\begin{bmatrix}
(\Gcalbf+\Ccalbf)\Wcalbf\Xicalbf^{\top}{\Lxx}^{\top}\\
\A \Pxx
\end{bmatrix}\\\label{eq:thmmainPyy}
\Pyy&\approx\removeinpaper{
\sum\limits_{i = 1}^{|\mathcal{P}|}w^{(i)}(\mathcal{Y}^{(i)} -\my)(\mathcal{Y}^{(i)} -\my)^{\top}
=}\begin{bmatrix}
(\Gcalbf+\Ccalbf)\Wcalbf(\Gcalbf+\Ccalbf)^{\top} &
\star\\
\A \Lxx \Xicalbf\Wcalbf(\Gcalbf+\Ccalbf)^{\top} & 
\A \Pxx\A^{\top}
\end{bmatrix}.
\end{align}
\end{subequations}
}
\removeinpaper{
and
\begin{subequations}\label{eq:thmmainPyy}
\begin{align}
\Pyy&\approx
\sum\limits_{i = 1}^{|\mathcal{P}|}w^{(i)}(\mathcal{Y}^{(i)} -\my)(\mathcal{Y}^{(i)} -\my)^{\top}\\
&=
(\mathcal{Y} - \my\mathbf{1}_{|\Pcal|}^{\top})\Wcalbf(\mathcal{Y} - \my\mathbf{1}_{|\Pcal|}^{\top}
)^{\top}\\
&=
\begin{bmatrix}
(\Gcalbf+\Ccalbf)\\
\A \Lxx \Xicalbf
\end{bmatrix}
\Wcalbf
\begin{bmatrix}
(\Gcalbf+\Ccalbf)\\
\A \Lxx \Xicalbf
\end{bmatrix}^{\top}\\
&=
\begin{bmatrix}
\Gcalbf+\Ccalbf\\
\A \Lxx \Xicalbf
\end{bmatrix}\Wcalbf
\begin{bmatrix}
(\Gcalbf+\Ccalbf)^{\top}\\
 \Xicalbf^{\top} {\Lxx}^{\top}\A^{\top}
\end{bmatrix}\\
&=
\begin{bmatrix}
(\Gcalbf+\Ccalbf)\Wcalbf(\Gcalbf+\Ccalbf)^{\top} &
(\Gcalbf+\Ccalbf)\Wcalbf\Xicalbf^{\top} {\Lxx}^{\top}\A^{\top}\\
\A \Lxx \Xicalbf\Wcalbf(\Gcalbf+\Ccalbf)^{\top} & 
\A \Pxx\A^{\top}
\end{bmatrix}
\end{align}
\end{subequations}
}
Now, consider the term $\Xicalbf\Wcalbf(\Gcalbf+\Ccalbf)^{\top}$. Using the previous definitions, we can express this term as
\shortenedEquation{
\begin{subequations}
\begin{align}
\Xicalbf\Wcalbf(\Gcalbf+\Ccalbf)^{\top}&=
\underbrace{\sum\limits_{i = 1}^{|\Pcal_c|}\Xi_c^{(i)}w_c^{(i)}(\nonlinfunc(\mn)+\uvec)^{\top}}_{=\Z\;\text{ as }\Xi_c^{(i)}=\Z\text{ by Definition}} + 
\sum\limits_{i = 1}^{|\Pcal_z|}\Xi_z^{(i)}w_z^{(i)}(\nonlinfunc(\Zcal^{(i)}) + \uvec)^{\top} + 
\sum\limits_{i = 1}^{|\Pcal_l|}\Xi_l^{(i)}w_l^{(i)}(\nonlinfunc(\mn)+\uvec)^{\top}\\
&=\sum\limits_{i = 1}^{|\Pcal_z|}\Xi_z^{(i)}w_z^{(i)}(\nonlinfunc(\Zcal^{(i)}) + \uvec)^{\top} + 
\underbrace{\sum\limits_{i = 1}^{|\Pcal_l|/2}(\Xi_l^{(i)} - \Xi_l^{(i)})w_l^{(i)}(\nonlinfunc(\mn)+\uvec)^{\top}}_{=\Z\text{ by Assumption~\ref{rem:symmetry}}}\\
&=\sum\limits_{i = 1}^{|\Pcal_z|/2}\Xi_z^{(i)}w_z^{(i)}(\nonlinfunc(\Zcal^{(i)}) + \uvec)^{\top} + \Xi_z^{(|\Pcal_z|/2+i)}w_z^{(|\Pcal_z|/2+i)}(\nonlinfunc(\Zcal^{(|\Pcal_z|/2+i)}) + \uvec)^{\top}\\
&=\sum\limits_{i = 1}^{|\Pcal_z|/2}\Xi_z^{(i)}w_z^{(i)}(\nonlinfunc(\Zcal^{(i)}) + \uvec)^{\top} - \Xi_z^{(i)}w_z^{(i)}(\nonlinfunc(\Zcal^{(|\Pcal_z|/2+i)}) - \uvec)^{\top}\\
&=\sum\limits_{i = 1}^{|\Pcal_z|/2}\Xi_z^{(i)}w_z^{(i)}(\nonlinfunc(\Zcal^{(i)}) - \uvec - \nonlinfunc(\Zcal^{(|\Pcal_z|/2+i)}) + \uvec)^{\top}\\
&=\sum\limits_{i = 1}^{|\Pcal_z|/2}\Xi_z^{(i)}w_z^{(i)}(\nonlinfunc(\Zcal^{(i)}) - \nonlinfunc(\Zcal^{(|\Pcal_z|/2+i)}))^{\top}\\
&=\sum\limits_{i = 1}^{|\Pcal_z|}\Xi_z^{(i)}w_z^{(i)}\nonlinfunc(\Zcal^{(i)})^{\top}\\
&=\Xicalbf_z\Wcalbf_z\Gcalbf_z^{\top}
\end{align}
\end{subequations}
}{
\begin{subequations}
\begin{align}
\Xicalbf\Wcalbf(\Gcalbf+\Ccalbf)^{\top}
&=\sum\limits_{i = 1}^{|\Pcal_z|}\Xi_z^{(i)}w_z^{(i)}(\nonlinfunc(\Zcal^{(i)}) + \uvec)^{\top}\\
&\quad+ 
\sum\limits_{i = 1}^{|\Pcal_l|/2}(\Xi_l^{(i)} - \Xi_l^{(i)})w_l^{(i)}(\nonlinfunc(\mn)+\uvec)^{\top}\nonumber\\
&=\Xicalbf_z\Wcalbf_z\Gcalbf_z^{\top}
\end{align}
\end{subequations}
}
which, when plugged into~\eqref{eq:thmmainPxy} and~\eqref{eq:thmmainPyy}, along with simplifications of the term $(\Gcalbf + \Ccalbf)\Wcalbf(\Gcalbf + \Ccalbf)^{\top}$ using ~\eqref{eq:diffY} and~\eqref{eq:diffY2}, yields the expression for the joint distribution stated in Proposition~\ref{thm:main}.
\end{proof}

\begin{proof}[Proof of Corollary~\ref{cor:compgains}]
This remark holds trivially in the case of the SC and UT, but is more complicated when ${\mathbf{N}}\Xicalbf_z$ no longer contains unique columns. To deal with this problem, let the superscript $\Pcal^u_z$ denote the set of points corresponding to the unique columns in ${\mathbf{N}}\Xicalbf_z$, with the corresponding linear parts set to zero and the weights of all multiples of the the same point summed, or mathematically
\begin{align}
\Scal &= \{\nuvec = \N\xivec|(w, \xivec)\in\Pcal_z\},\\ 
\Pcal^u_z &= \Big\{(w^u,\xivec^u)\Big|w^u = \hspace{-20pt}\sum\limits_{(w,\xivec)\in\Pcal_z, \N\xivec = \nuvec,}\hspace{-10pt} w,\;\;\; \xivec^u = \begin{bmatrix}\nuvec\\\Z_{L\times 1}\end{bmatrix}, \nuvec\in\Scal\Big\},\nonumber
\end{align}
such that the set $\Scal$ contains the unique columns of ${\mathbf{N}}\Xicalbf_z$, and the cardinality of the sets satisfy $|\Scal| |\Pcal_z^u| = |\Pcal_z|$. Let
\replaceinpaper{
\begin{align}
w_z^{u,(i)} &= w^{u,(i)}\in\mathbb{R}, &\Xi_z^{u,(i)} &= \xivec^{u,(i)}\in\mathbb{R}^X, &\mathcal{X}_z^{u,(i)} &= \mx + \Lxx\xivec^{u,(i)}\in\mathbb{R}^X,&\forall&(w^{u,(i)},\xivec^{u,(i)})\in\mathcal{P}_z^u,
\end{align}
}{
\begin{align*}
\begin{cases}
w_z^{u,(i)} = w^{u,(i)}\in\mathbb{R}\\
\Xi_z^{u,(i)} = \xivec^{u,(i)}\in\mathbb{R}^X,\\
\mathcal{X}_z^{u,(i)} = \mx \hspace{-2pt}+\hspace{-2pt} \Lxx\xivec^{u,(i)}\in\mathbb{R}^X,
\end{cases}
\forall&(w^{u,(i)},\xivec^{u,(i)})\in\mathcal{P}_z^u,
\end{align*}
}
and, similar to~\eqref{eq:matdef}, define the convenient matrices
\replaceinpaper{
\begin{align}
\wvec_z^u &=\begin{bmatrix}
w_z^{u,(1)} & \cdots & w_z^{u,(|\Pcal_z^u|)}
\end{bmatrix},&
\Xicalbf_z^u &=\begin{bmatrix}
\Xi_z^{u,(1)} & \cdots & \Xi_z^{u,(|\Pcal_z^u|)}
\end{bmatrix},&
\Xcalbf_z^u &=\begin{bmatrix}
\Xcal_z^{u,(1)} & \cdots & \Xcal_z^{u,(|\Pcal_z^u|)}
\end{bmatrix}
\end{align}
}{
\begin{subequations}
\begin{align}
\wvec_z^u &=\begin{bmatrix}
w_z^{u,(1)} & \cdots & w_z^{u,(|\Pcal_z^u|)}
\end{bmatrix},\\
\Xicalbf_z^u &=\begin{bmatrix}
\Xi_z^{u,(1)} & \cdots & \Xi_z^{u,(|\Pcal_z^u|)}
\end{bmatrix},\\
\Xcalbf_z^u &=\begin{bmatrix}
\Xcal_z^{u,(1)} & \cdots & \Xcal_z^{u,(|\Pcal_z^u|)}
\end{bmatrix},
\end{align}
\end{subequations}}
with $\mathcal{Z}_z^{u,(i)} = \mathbf{N}
\mathcal{X}_z^{u,(i)}$. By these definitions, we have
\shortenedEquation{
\begin{align}
\sum\limits_{i = 1}^{|\Pcal_z|}
w^{(i)}_z\nonlinfunc(\Zcal_z^{(i)}) &=
\sum\limits_{j = 1}^{|\Pcal_z^u|}\Bigg(
\sum\limits_{i = 1, \N\Xi_z^{(i)} = \N\Xi_z^{u,(j)}}^{|\Pcal_z|}
w^{(i)}_z\nonlinfunc(\Zcal_z^{(i)})
\Bigg)\\
&=
\sum\limits_{j = 1}^{|\Pcal_z^u|}\Bigg(
\sum\limits_{i = 1, \N\Xi_z^{(i)} = \N\Xi_z^{u,(j)}}^{|\Pcal_z|}
w^{(i)}_z\nonlinfunc(\Zcal_z^{u,(j)})
\Bigg)\\
&= \sum\limits_{j = 1}^{|\Pcal_z^u|}\Bigg(
\sum\limits_{i = 1, \N\Xi_z^{(i)} = \N\Xi_z^{u,(j)}}^{|\Pcal_z|}
w^{(i)}_z
\Bigg)\nonlinfunc(\Zcal_z^{u,(j)})\\
&=\sum\limits_{j = 1}^{|\Pcal_z^u|}
w^{u,(j)}_z\nonlinfunc(\Zcal_z^{u,(j)}).
\end{align}
we also find that
\begin{align}
\sum\limits_{i = 1}^{|\Pcal_z|}\N\Xi_z^{(i)}
w^{(i)}_z\nonlinfunc(\Zcal_z^{(i)})^{\top} &=
\sum\limits_{j = 1}^{|\Pcal_z^u|}\Bigg(
\sum\limits_{i = 1, \N\Xi_z^{(i)} = \N\Xi_z^{u,(j)}}^{|\Pcal_z|}
\N\Xi_z^{(i)}w^{(i)}_z\nonlinfunc(\Zcal_z^{(i)})^{\top}
\Bigg)\\
&=
\sum\limits_{j = 1}^{|\Pcal_z^u|}\Bigg(
\sum\limits_{i = 1, \N\Xi_z^{(i)} = \N\Xi_z^{u,(j)}}^{|\Pcal_z|}
\N\Xi_z^{u,(j)}w^{(i)}_z\nonlinfunc(\Zcal_z^{u,(j)})^{\top}
\Bigg)\\
&= \sum\limits_{j = 1}^{|\Pcal_z^u|}\Bigg(
\sum\limits_{i = 1, \N\Xi_z^{(i)} = \N\Xi_z^{u,(j)}}^{|\Pcal_z|}
w^{(i)}_z
\Bigg)\N\Xi_z^{u,(j)}\nonlinfunc(\Zcal_z^{u,(j)})^{\top}\\\label{eq:53}
&=\sum\limits_{j = 1}^{|\Pcal_z^u|}
\N\Xi_z^{u,(j)}w^{u,(j)}_z\nonlinfunc(\Zcal_z^{u,(j)})^{\top}.
\end{align}
}{
\begin{align}
\sum\limits_{i = 1}^{|\Pcal_z|}
w^{(i)}_z\nonlinfunc(\Zcal_z^{(i)})
&= \sum\limits_{j = 1}^{|\Pcal_z^u|}\Bigg(
\sum\limits_{i = 1, \N\Xi_z^{(i)} = \N\Xi_z^{u,(j)}}^{|\Pcal_z|}
w^{(i)}_z
\Bigg)\nonlinfunc(\Zcal_z^{u,(j)})\nonumber\\\label{eq:53}
&=\sum\limits_{j = 1}^{|\Pcal_z^u|}
w^{u,(j)}_z\nonlinfunc(\Zcal_z^{u,(j)})
\end{align}
and, by similar developments,
\begin{align}
\sum\limits_{i = 1}^{|\Pcal_z|}\N\Xi_z^{(i)}
w^{(i)}_z\nonlinfunc(\Zcal_z^{(i)})^{\top}&=\sum\limits_{j = 1}^{|\Pcal_z^u|}
\N\Xi_z^{u,(j)}w^{u,(j)}_z\nonlinfunc(\Zcal_z^{u,(j)})^{\top}\nonumber
\end{align}
}
Now, since\replaceinpaper{
\begin{equation}\label{eq:strongsym}
    \forall(w^{(i)},\xivec^{(i)})\in\Pcal_z, \exists(w^{(j)},\xivec^{(j)})\in\Pcal_z\quad\text{such that}\quad(w^{(i)}, \N\xivec^{(i)},\bar{\N}\xivec^{(i)}) = (w^{(j)},{\N}\xivec^{(j)},-\bar{\N}\xivec^{(j)}),
\end{equation}}{
$\forall(w^{(i)},\xivec^{(i)})\in\Pcal_z, \exists(w^{(j)},\xivec^{(j)})\in\Pcal_z$ such that $(w^{(i)}, \N\xivec^{(i)},\bar{\N}\xivec^{(i)}) = (w^{(j)},{\N}\xivec^{(j)},-\bar{\N}\xivec^{(j)})$}
by assumption, choose an ordering of the columns in $\Xicalbf_z$ where
\begin{equation}\label{eq:55}
\begin{bmatrix}
\N\\\bar{\N}
\end{bmatrix}\Xi_z^{(i)} = 
\begin{bmatrix}
\N\\-\bar{\N}
\end{bmatrix}\Xi_z^{(|\Pcal_z|/2+i)}, \;\; \forall i = 1,...,|\Pcal_z|/2.
\end{equation}
then $w^{(i)} = w^{(|\Pcal_z|/2+i)}$, and
\begin{equation}\label{eq:56}
\Zcal^{(i)} = \N\Xcal^{(i)} \removeinpaper{= \N(\mx + \Lxx \xivec^{(i)}) = \N(\mx + \Lxx \xivec^{(|\Pcal_z|/2+i)})}= \N\Xcal^{(|\Pcal_z|/2+i)} = \Zcal^{(|\Pcal_z|/2+i)}
\end{equation}
By examination of the product $\Xicalbf\Wcalbf(\Gcalbf+\Ccalbf)^{\top}$, and using~\eqref{eq:53},~\eqref{eq:55} and~\eqref{eq:56}, we find that
\shortenedEquation{
\begin{align}
\Xicalbf\Wcalbf(\Gcalbf+\Ccalbf)^{\top} &=\sum\limits_{i = 1}^{|\Pcal_z|}\Xi_z^{(i)}w_z^{(i)}\nonlinfunc(\Zcal^{(i)})^{\top}\\
&=\sum\limits_{i = 1}^{|\Pcal_z|}
\begin{bmatrix}
\N\\\bar{\N}
\end{bmatrix}
\Xi_z^{(i)}w_z^{(i)}\nonlinfunc(\Zcal^{(i)})^{\top}\\
&=\sum\limits_{i = 1}^{|\Pcal_z|/2}
\begin{bmatrix}
2\N\\\bar{\N}-\bar{\N}
\end{bmatrix}
\Xi_z^{(i)}w_z^{(i)}\nonlinfunc(\Zcal^{(i)})^{\top}\\
&=\sum\limits_{i = 1}^{|\Pcal_z|}
\begin{bmatrix}
\N\\\Z_{L\times X}
\end{bmatrix}
\Xi_z^{(i)}w_z^{(i)}\nonlinfunc(\Zcal^{(i)})^{\top}\\
&=
\sum\limits_{i = 1}^{|\Pcal_z^u|}\Xi_z^{u,(i)}w_z^{u,(i)}\nonlinfunc(\Zcal^{u,(i)})^{\top}\\
&=\Xicalbf_z^u\Wcalbf_z^u\Gcalbf_z^u
\end{align}
}{
\begin{align}
\Xicalbf\Wcalbf(\Gcalbf\hspace{-1pt}+\hspace{-1pt}\Ccalbf)^{\top}\hspace{-4pt}=\hspace{-2pt}\sum\limits_{i = 1}^{|\Pcal_z|}\Xi_z^{(i)}w_z^{(i)}\nonlinfunc(\Zcal^{(i)})^{\top}\hspace{-4pt}=\hspace{-1pt}\Xicalbf_z^u\Wcalbf_z^u(\Gcalbf_z^u)^{\top}\hspace{-2pt}
\end{align}
}
and trivially
\begin{equation}
(\Gcalbf_z + \Ccalbf_z)\Wcalbf_z(\Gcalbf_z + \Ccalbf_z)^{\top} \removeinpaper{+ w_{cl}\uvec_{\lvec}\uvec_{\lvec}^{\top}}\hspace{-2pt}+\hspace{-1pt} (\Gcalbf_z^u + \Ccalbf_z^u)\Wcalbf_z^u(\Gcalbf_z^u  + \Ccalbf_z^u)^{\top} \removeinpaper{+ w_{cl}\uvec_{\lvec}\uvec_{\lvec}^{\top}}\hspace{-3pt},
\end{equation}
\begin{subequations}
where
\begin{align}
w_{cl} &= 1 - \wvec_z \mathbf{1}_{|\Pcal_z|}\\
\Gcalbf_z &= \begin{bmatrix}
\nonlinfunc(\Zcal^{u,(1)}) & \cdots & \nonlinfunc(\Zcal^{u,(\Pcal_z^u|)})
\end{bmatrix}\\
\uvec &= -w_{cl}\nonlinfunc(\mn) - \Gcalbf_z\wvec_z^{\top}\\
\uvec_{\lvec}&= \nonlinfunc(\mn) + \uvec \\
\Ccalbf_z^u &= \uvec\mathbf{1}_{|\Pcal_z^u|}^{\top}
\end{align}
\end{subequations}
Note here that
\begin{align}
w_{cl} &= 1 - \wvec_z \mathbf{1}_{|\Pcal_z|}= 1 - \wvec_z^u \mathbf{1}_{|\Pcal_z^u|}\\ \uvec  &= -w_{cl}\nonlinfunc(\mn) - \Gcalbf_z\wvec_z^{\top} =  -w_{cl}\nonlinfunc(\mn) - \Gcalbf_z^u(\wvec_z^u)^{\top}\nonumber
\end{align}
Thus, Proposition~\ref{thm:main} yields identical results when evaluated with $\Pcal_z$ and $\Pcal_z^u$. And since $|\Pcal_z^u| = C(Z)$ the total number of function evaluation scales with $C(Z)$ instead of $C(X)$, provided the symmetry assumption in Corollary~\ref{cor:compgains} holds. 
\end{proof}

\begin{proof}[Proof of Corollary~\ref{cor:A}]
If $\bar{\mathbf{N}}\Xicalbf_z = \Z$,
\begin{equation}
\Lxx \Xicalbf_z = \begin{bmatrix}
\Lnn&\Z\\
\Lln&\Lll
\end{bmatrix}
\begin{bmatrix}
\mathbf{N}\\
\bar{\mathbf{N}}
\end{bmatrix}\Xicalbf_z 
\removeinpaper{= \begin{bmatrix}
\Lnn&\Z\\
\Lln&\Lll
\end{bmatrix}
\begin{bmatrix}
\mathbf{N}\Xicalbf_z\\
\bar{\mathbf{N}}\Xicalbf_z
\end{bmatrix} = 
\begin{bmatrix}
\Lnn\mathbf{N}\Xicalbf_z\\
\Lln\mathbf{N}\Xicalbf_z + \Lll\bar{\mathbf{N}}\Xicalbf_z
\end{bmatrix} }= 
\begin{bmatrix}
\Lnn\\
\Lln
\end{bmatrix}\mathbf{N}\Xicalbf_z.
\vspace*{-7pt}\end{equation}
\end{proof}

\begin{proof}[Proof of Corollary~\ref{cor:otherstruct}]
Here, we can let the number of rows of $\A_1$ and $\A_2$ in~\eqref{eq:otherstruct} be $N_1$ and $N_2$ respectively. Let $\A^{\top} = [\A_1^{\top}, \A_2^{\top}]$ and define a function with output $\ovec$, and a map
\begin{equation}
\ovec
\hspace{-1pt}=\hspace{-1pt}
\begin{bmatrix}
\gvec(\zvec)\\\A\xvec
\end{bmatrix}
, \;\; \M
\hspace{-1pt}=\hspace{-1pt}
\begin{bmatrix}
\I_{N_1} & \I_{N_1} & \Z \\
\Z & \Z & \I_{N_2}
\end{bmatrix}\hspace{-1pt}.
\end{equation}
such that with the structure of the function defining $\yvec$ in~\eqref{eq:otherstruct}, we have that $\yvec = \M\ovec$. Then, the joint distribution $p(\xvec,\ovec)$ can simply be computed using Proposition~\ref{thm:main} to find $\boldsymbol{m}^{\ovec},\PP^{\xvec\ovec},\PP^{\ovec\ovec}$, followed by computation of the joint distribution $p(\xvec,\yvec)$ in terms of $\boldsymbol{m}^{\yvec},\PP^{\xvec\yvec},\PP^{\yvec\yvec}$ where then
\begin{equation}
\my \hspace{-1pt}=\hspace{-1pt}\M\boldsymbol{m}^{\ovec}, \Pxy\hspace{-1pt}=\hspace{-1pt}\PP^{\xvec\ovec}\M^{\top}, \Pyy\hspace{-1pt}=\hspace{-1pt}\M\PP^{\ovec\ovec}\M^{\top}.
\vspace*{-5pt}\end{equation}
\vspace*{-7pt}\end{proof}

\bibliography{main_arxiv_final_version}{}
\bibliographystyle{IEEEtran}

\end{document}